\tikzstyle{every picture}+=[>=stealth,initial text={}]
\tikzstyle{accepting}=[accepting by arrow]
\title{Interrupt Timed Automata with Auxiliary Clocks and Parameters\thanks{This work has been supported by project ImpRo ANR-2010-BLAN-0317}}
\author{B. B\'erard\inst{1} \and S. Haddad\inst{2} \and A. Jovanovi\'c \inst{3} \and D. Lime\inst{4}}
\institute{
    Sorbonne Universit\'e, UPMC-Paris 6, CNRS UMR 7606, Paris, France\\
    \email{Beatrice.Berard{@}lip6.fr}\\
    \and 
    ENS Cachan, LSV, CNRS, INRIA, Cachan, France\\
     \email{haddad{@}lsv.ens-cachan.fr}
    \and 
    Department of Computer Science, University of Oxford, Oxford, UK\\
    \email{Aleksandra.Jovanovic{@}cs.ox.ac.uk}
    \and 
    \'Ecole Centrale de Nantes, IRCCyN, CNRS, Nantes, France\\
    \email{Didier.Lime{@}ec-nantes.fr}
}
\newcommand{\rel}{\bowtie}
\newcommand{\N}{\ensuremath{\mathbb{N}}}
\newcommand{\Q}{\ensuremath{\mathbb{Q}}}
\newcommand{\R}{\ensuremath{\mathbb{R}}}
\newcommand{\A}{\mathcal{A}}
\newcommand{\C}{\mathcal{C}}
\newcommand{\T}{\mathcal{T}}
\newcommand{\Rl}{\mathcal{R}}
\newcommand{\La}{\mathcal{L}}
\newcommand{\Li}{\mathcal{L}in}
\newcommand{\Diff}{\mathcal{D}{\it iff}}
\newcommand{\Po}{\mathcal{P}ol}
\newcommand{\Fr}{\mathcal{F}rac}
\newcommand{\U}{\mathcal{U}}
\newcommand{\tr}{\xrightarrow}
\newcommand{\fee}{\varphi}
\newcommand{\eps}{\varepsilon}
\newcommand{\vect}[1]{\mathbf{#1}}
\newcommand{\act}{{\tt act}}
\def\sg{\mathrel[\joinrel\mathrel[} 
\def\sd{\mathrel]\joinrel\mathrel]} 
\newcommand{\sem}[1]{\sg\!\!#1\!\!\sd}
\newcommand{\timedtrans}[3]{%
  \begin{tabular}{c}
    \(#1\) \\ 
    \(#2\) \\ 
    \(#3\) 
  \end{tabular}
}
\begin{document}

\maketitle

\begin{abstract}
  Interrupt Timed Automata (ITA), an expressive timed model, has been
  introduced in order to take into account interruptions, according to
  levels.  Due to this feature, this formalism is incomparable with
  Timed Automata.  However several decidability results related to
  reachability and model checking have been obtained.  We add
  auxiliary clocks to ITA, thereby extending its expressive power
  while preserving decidability of reachability.  Moreover, we define
  a parametrized version of ITA, with polynomials of parameters
  appearing in guards and updates.  While parametric reasoning is
  particularly relevant for timed models, it very often leads to
  undecidability results. We prove that various reachability problems,
  including \emph{robust reachability}, are decidable for this model,
  and we give complexity upper bounds for a fixed or variable number
  of clocks, levels and parameters.
\end{abstract}

\section{Introduction}\label{sec:intro}

\paragraph{Timed and hybrid models.} In order to model timed systems, the
expressive model of Hybrid Automata (HA) has been
proposed~\cite{alur95}.  Since its expressive power leads to the
undecidability of most verification problems, several semi-decision
procedures have been designed fo HA as well as subclasses with
decidability results like Timed Automata (TA)~\cite{alur90}.  The
model of interrupt timed automata (ITA)~\cite{berard09,berard12} was
proposed as a subclass of hybrid automata, incomparable with the class
of timed automata, where task interruptions are taken into
account. Hence ITA are particularly suited for the modelling of
scheduling with preemption.

\paragraph{Parametric verification.} Getting a complete knowledge of a
system is often impossible, especially when integrating quantitative
constraints. Moreover, even if these constraints are known, when the
execution of the system slightly deviates from the expected behaviour,
due to implementation choices, previously established properties may
not hold anymore. Additionally, considering a wide range of values for
constants allows for a more flexible and robust design.  

Introducing parameters instead of concrete values is an elegant way of
addressing these three issues.  Parametrisation however makes
verification more difficult. Besides, it raises new problems like
parameter synthesis, \textit{i.e.}, finding the set (or a subset) of
values for which some property holds.

\paragraph{Parameters for timed models.} Among quantitative
features, parametric reasoning is particularly relevant for timing
requirements, like network delays, time-outs, response times or clock
drifts.

Pioneering work on parametric real time reasoning was presented
in~\cite{alur93} for the now classical model of timed automata, with
parameter expressions replacing the constants to be compared with
clock values. Since then, many studies have been devoted to the
parametric verification of timed
models~\cite{berard99,miller00,doyen07}, mostly establishing
undecidability results for questions like parametric reachability,
even for a small number of clocks or parameters.  Relaxing
completeness requirement or guaranteed termination, several methods
and tools have been developed for parameter synthesis in timed
automata~\cite{andre09,andre12,jlr13}, as well as in hybrid
automata~\cite{alur96b,henzinger97}.  Another research direction
consists in defining subclasses of parametric timed models for which
some problems become
decidable~\cite{bozzelli-FMSD-09,hune-jlap-02,jflr12}. Unfortunately,
these subclasses are severely restricted.  It is then a challenging
issue to define expressive parametric timed models where reachability
problems are decidable.

\paragraph{Contributions.} 
Our contributions are twofold. First we define a more expressive version of ITA,
including auxiliary clocks. We prove that this new model is strictly more
expressive than the former one but retains decidability for the
reachability problem. With respect to the complexity issues,
we provide upper bounds: 2-EXPTIME in the general case,
PSPACE when the number of levels is fixed and PTIME when the number of clocks
is fixed.
We also give a PSPACE matching lower bound when the number of levels is fixed.

Our second contribution is to enrich ITA with parameters in the spirit
above. A PITA is a parametric version of ITA where polynomial
parameter expressions can be combined with clock values both as
additive and multiplicative coefficients.  Considering only additive
parametrisation, we reduce reachability to the same problem in basic
ITA. This reduction entails complexity upper bounds of respectively
2-EXPTIME, PSPACE when the number of levels is fixed and PTIME when
the number of clocks and parameters is fixed.  The multiplicative
setting is much more expressive and also very useful in practice, for
instance to model clock drifts. We prove that reachability in
parametric ITA is decidable as well as its robust variant, an
important property for implementation issues.  To the best of our
knowledge, this is the first time such a result has been obtained for
a model including a multiplicative parametrization.  Furthermore, we
establish upper bounds for the computational complexity: 2-EXPSPACE
and PSPACE when the number of levels is fixed.  Our technique combines
the construction of symbolic class automata from the unparametrized
case and the first order theory of real numbers.

\paragraph{Outline.} The model of Interrupt Timed Automata with
auxiliary clocks is defined in Section~\ref{sec:modele}, with
reachability analysis in Section~\ref{sec:reach-ita}. The parametric
ITA model is introduced in Section~\ref{sec:pmodele}. The reachability
analysis is split into two sections: the additive case is handled in
Section~\ref{sec:amodele} while the results for the multiplicative
case are given in Section~\ref{sec:multpar}. We conclude and give some
perpectives for this work in Section~\ref{sec:conc}.

\section{Interrupt Timed Automata}\label{sec:modele}

\subsection{Notations}
The sets of natural, rational and real numbers are denoted
respectively by $\N$, $\Q$ and $\R$. Given an alphabet $\Sigma$, we
denote by $\Sigma^*$ the set of finite words over $\Sigma$, with
$\eps$ the empty word. The set of timed words over $\Sigma$ is the set
of finite sequences of the form $(a_1,t_1) \ldots (a_n,t_n)$ where
$a_i \in \Sigma$ for all $i \in \{1, \ldots,n\}$ and $(t_i)_{1\leq
  i\leq n}$ is a non decreasing sequence of real numbers. A timed
language is a set of timed words. For a timed word $w=(a_1,t_1) \ldots
(a_n,t_n)$, we define $Untime(w)=a_1 \ldots a_n$ as its projection on
$\Sigma^*$ and for a timed language $L$, we set $Untime(L) =
\{Untime(w) \mid w\in L\}$.

Given two sets $F,G$ with $F$ finite, we denote by $\Li(F,G)$ the set
of linear expressions $\sum_{f\in F} a_ff+b$ where the $a_f$'s and $b$
belong to $G$. We also denote by $\Diff(F)$ the set of expressions
$f-f'$ with $f,f' \in F$.

\paragraph{Clock constraints.} Let $X$ be a finite set of clocks and
let $Y,Z$ be disjoint subsets of $X$. We denote by $\C(Y,Z)$ the set of
constraints obtained by conjunctions of atomic propositions of the
form $C \rel 0$, where $C$ is an expression in $\bigcup_{y \in Y}
\Li(Z\cup\{y\},\Q) \cup \Diff(Y)$ and $\rel \,\in
\{>,\geq,=,\leq,<\}$. Such a constraint either compares with
zero a linear expression of clocks in $Y\cup Z$ including at most
one clock of $Y$, or compares two clocks of $Y$.  We also set
$\C(X)=\bigcup_{Y,Z \subseteq X} \C(Y,Z)$.

\paragraph{Updates.} An \emph{update} over $X$ is a conjunction of
assignments of the form $\wedge_{y \in Y}\ y:= C_y$, where $Y
\subseteq X$ and $C_y \in \Li(X,\Q)$. The set of updates is written
$\U(X)$.  For an expression $C$ and an update $u$, the expression
$C[u]$ is obtained by ``applying'' $u$ to $C$, \textit{i.e.},
simultaneously substituting each $x$ by $C_x$ in $C$, if $x := C_x$ is
the update for $x$ in $u$.  For instance, for clocks $X = \{x_1,
x_2\}$, expression $C= 2x_2 -2x_1 + 3$ and the update $u$ defined by
$x_1 := 1 \wedge x_2 := 3x_1 +2$, applying $u$ to $C$ yields the
expression $C[u]= 2(3x_1 +2) -2(1) + 3 =6x_1 + 5$.

\paragraph{Valuations.} A \emph{clock valuation} is a mapping $v : X
\mapsto \R$, with $\vect{0}$ the valuation where all clocks have value
$0$.  For a valuation $v$ and an expression $C \in \Li(X,\Q)$, we note
$v(C) \in \R$ the result of evaluating $C$ w.r.t. $v$.  Given an
update $u$ and a valuation $v$, the valuation $v[u]$ is defined by
$v[u](x)=v(x)$ if $x$ is unchanged by $u$ and $v[u](x) = v(C_x)$ if $x
:= C_x$ is the update for $x$ in $u$.  For instance, let $X = \{x_1,
x_2, x_3\}$ be a set of three clocks. For valuation $v = (2, 1.5, 3)$
and update $u$ defined by $x_1 := 1 \wedge x_3 := x_3 - x_1$, applying
$u$ to $v$ yields the valuation $v[u] = (1, 1.5, 1)$.

\subsection{Interrupt Timed Automata}

\paragraph{Definitions.} The behaviour of an ITA can be viewed as the
one of an operating system with interrupt levels.  With each level
are associated a set of states and a set of clocks partitionned into a
\emph{main} clock and \emph{auxiliary} clocks.  In a state of a given
level, exactly one clock of this level is active (rate $1$), while the
other clocks at lower or equal levels are suspended (rate $0$), and
the clocks at higher levels are not yet activated and thus contain
value $0$.  The enabling conditions on transitions, called
\emph{guards}, are constraints over clocks of the current level or
main clocks of lower levels (with some restrictions).  Transitions can
\emph{update} the clock values.  If the transition decreases
(resp. increases) the level, then each clock which is relevant after
(resp. before) the transition can (1) be left unchanged, (2) be
updated with a linear expression of main clocks of strictly lower
levels or (3) be updated with another clock at the same level (with
some restrictions).  Roughly speaking, the restrictions are introduced
to forbid at some level any (direct or indirect) influence of the
auxiliary clocks at lower levels on the behaviour of the ITA.

\begin{definition}
\label{def:ita}
  An \emph{interrupt timed automaton} (ITA) is a tuple
  $\A=\langle\Sigma, n, Q, q_0,$ $Q_f,  \lambda, X, \act, \Delta\rangle$,
  where:
\begin{itemize}
	\item $\Sigma$ is a finite alphabet; 
	\item $n$ is the number of levels;
	\item $Q$ is a finite set of states, $q_0$ is the initial
          state and $Q_f$ is a subset of $Q$ of final states.  The
          mapping $\lambda : Q \rightarrow \{1, \ldots, n\}$
          associates with each state its level. We denote by
          $Q_i=\lambda^{-1}(i)$ the set of states at level $i$;
      \item $X=\biguplus_{i=1}^n X_i$ is the set of clocks
        partitionned according to the levels and $X_i=\{x_i\}\uplus
        Y_i$ includes a \emph{main} clock $x_i$ and a set of
        \emph{auxiliary} clocks $Y_i$. The set of main clocks of
        levels less than $k$ is denoted by $X_{<k}=\{x_i\mid i<k\}$ ;
      \item $\act: Q \rightarrow X$ with $q \in Q_i \Rightarrow
        \act(q) \in X_i$ associates with a state its \emph{active}
        clock;
%
%
%
%
      \item $\Delta \subseteq Q \times \C(X) \times (\Sigma \cup
        \{\eps\}) \times \U(X) \times Q$ is a finite set of
        transitions.  Let $q \tr{\fee, a, u} q'$ be a transition in
        $\Delta$ with $k=\lambda(q)$ and $k'=\lambda(q')$. The guard
        $\fee$ is a constraint in $\C(X_k,X_{<k})$.
	\begin{itemize}
        \item if $k \leq k'$ then the update $u$ is of the
          form $$\bigwedge_{z \in \bigcup_{i\leq k}X_i} z := C_z$$
         \item if $k > k'$ then
        the update $u$ is of the form 
        $$\bigwedge_{z \in \bigcup_{i\leq k'}X_i} z := C_z\: \wedge 
        \bigwedge_{z \in \bigcup_{k'<i\leq k}X_i} z := 0$$

	\end{itemize}
            where,  when $z \in X_i$, 
            \begin{itemize}
                \item either $C_z=z$, meaning that $z$ is unchanged;
		\item or $C_z=\sum_{j<i} a_jx_j+b$, \textit{i.e.},
                  $z$ is updated by an expression over main clocks of
                  lower levels;
		\item or $C_z =z' \in X_i$ if $z \in Y_i$ or $i =
                  k=k'$, \textit{i.e.}, $z$ is updated by another
                  clock at the same level\\ under the condition that
                  $z$ is not a main clock of level lower than the
                  current one \footnote{The motivation for this rather
                    elaborate condition is explained in the
                    reachability decision procedure.}.
	   \end{itemize}	
\end{itemize}
\end{definition}

The semantics of an ITA is described by a transition system, where a
configuration $(q,v)$ consists of a state $q$ of the ITA and a clock
valuation $v$.

\begin{definition} 
\label{def:semantics}
The semantics of an ITA $\A$ is defined by the (timed) transition
system $\T_{\A}= (S, s_0, \rightarrow)$.  The set of configurations is
$S=\left\{\!(q,v) \mid q \in Q, \ v \in \R^X \!\right\}\!$, with
initial configuration $s_0=(q_0, \vect{0})$. The relation
$\rightarrow$ on $S$ consists of two types of steps:
\begin{description}
\item[Time steps:] Only the active clock in a state can evolve, all
  other clocks are suspended.  For a state $q$, 
  a time step of duration $d$ is defined by $(q,v)
  \tr{d} (q, v')$ with $v'(\act(q))=v(\act(q))+ d$ and
  $v'(x)=v(x)$ for any other clock $x$. We write $v'=v+_q d$.

\item[Discrete steps:] A discrete step $(q, v) \tr{e} (q', v')$
  can occur for some transition $e=q \tr {\fee, a, u} q'$ in
  $\Delta$ such that $v \models \fee$ and $v' = v[u]$.  
\end{description}
\end{definition}
A \emph{run} of $\A$ is a finite path in the transition system
$\T_{\A}$, which can be written as an alternating sequence of
(possibly null) time and discrete steps. A state $q \in Q$ is
\emph{reachable} from $q_0$ if there is a path in $\T_{\A}$ from
$(q_0, \vect{0})$ to $(q,v)$, for some valuation $v$.  A run with
label $d_1a_1 d_2 a_2 \ldots d_n a_n$ is \emph{accepting} if it starts
in $(q_0, \vect{0})$ and ends in $(q,v)$, for some $q \in Q_f$ and
some valuation $v$. For such a run, the timed word
$w=(a_1,d_1)(a_2,d_1+d_2) \ldots (a_n, d_1+\ldots +d_n)$ (where pairs
with $\eps$ actions are removed) is said to be \emph{accepted} by
$\A$.  The timed language of $\A$, denoted by $\La(\A)$, is the set of
timed words accepted by $\A$. The untimed language of $\A$ is
$Untime(\La(\A))$.

\medskip We now show several properties of this model linked to the
presence of auxiliary clocks.

\begin{example}[Simulation of timing policies]
  The earlier definition of ITA from~\cite{berard12} is a restriction
  of Definition~\ref{def:ita} without auxiliary clocks but with a
  policy, which can be either urgent, delayed or lazy, associated with
  each state. In a lazy state time may elapse, in an urgent state time
  may not elapse and in a delayed state time must elapse.  We show in
  Figure~\ref{fig:policy-simulation} how to model timing policies with
  a dedicated auxiliary clock per level, say $y_i$. When entering a
  state $q$ of level $i$ from a state $q'$ of level $j\geq i$, $y_i$
  is updated with the active clock of $q$. By definition, when
  entering a state $q$ of level $i$ from a state $q''$ of level $k<i$,
  $y_i$ and the active clock of $q$ are null. Thus checking whether
  time has elapsed in $q$ is equivalent to check whether
  $\act(q)>y_i$. When $q$ is a lazy state there is nothing to check. 
\end{example}

\begin{figure}[ht]
\centering
\small
\subfigure[Urgent state $q$, with $k<i \leq j$]{\label{fig:urgent}
\begin{tikzpicture}[node distance=2.5cm,auto]

\node[state, scale=0.7,minimum width={35pt}] (qi) {$q,i$};
\node[state, scale=0.7,minimum width={35pt}] (qj) [node distance=2.75cm,above left of=qi] {$q',j$};
\node[state, scale=0.7,minimum width={35pt}] (qk) [node distance=2.75cm,below left of=qi] {$q'',k$};
\node (qdest)  [node distance=3cm,right of=qi] {};
\path[->] (qi) edge node {\timedtrans{}{}{y_i = \act(q)}} (qdest);
\path[->] (qj) edge node {\timedtrans{}{}{y_i:=\act(q)}} (qi);
\path[->] (qk) edge (qi);

\end{tikzpicture}

}
\qquad \qquad
\subfigure[Delayed state $q$, with $k<i \leq j$]{\label{fig:delayed}
\begin{tikzpicture}[node distance=2.5cm,auto]

\node[state, scale=0.7,minimum width={35pt}] (qi) {$q,i$};
\node[state, scale=0.7,minimum width={35pt}] (qj) [node distance=2.75cm,above left of=qi] {$q',j$};
\node[state, scale=0.7,minimum width={35pt}] (qk) [node distance=2.75cm,below left of=qi] {$q'',k$};
\node (qdest)  [node distance=3cm,right of=qi] {};
\path[->] (qi) edge node {\timedtrans{}{}{y_i < \act(q)}} (qdest);
\path[->] (qj) edge node {\timedtrans{}{}{y_i:=\act(q)}} (qi);
\path[->] (qk) edge (qi);

\end{tikzpicture}
}

\caption{Simulating timing policies}
\label{fig:policy-simulation}
\end{figure}
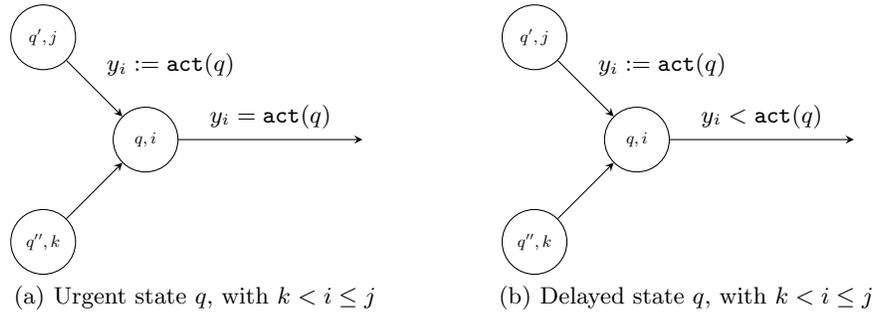


\begin{example}[About expressiveness]
  Consider the ITA $\A_1$ of Figure~\ref{fig:expressiveness} with a
  single level and single final state $q_2$. The main clock $x$ is
  active in all states and $y$ is an auxiliary clock. Its untimed
  language is $(ab)^+$. In the accepted timed words, there is an
  occurrence of $a$ at each time unit and the successive occurrences
  of $b$ come each time closer to the next occurrence of $a$ than
  previously.  More formally, its timed language $L=\La(\A_1)$ is
  defined by:
\begin{eqnarray*}
L=\big\{ (a,t_1)(b,t_2) &\ldots& (a,t_{2p+1})(b,t_{2p+2}) 
\mid p \in \N,\\
&&\forall 0\leq i\leq p,\ t_{2i+1}=i+1 \mbox{ and }  i+1<t_{2i+2}<i+2, \\
&&\forall 1\leq i\leq p, \ t_{2i+2} - t_{2i+1}< t_{2i} - t_{2i-1} \big\}
\end{eqnarray*}
It has been shown in~\cite{berard12} that this timed language cannot
be accepted by an ITA without auxiliary clocks, which yields the next 
proposition.
\end{example}

\begin{proposition}
  There exists a timed language of an ITA with a single level and one
  auxiliary clock that cannot be accepted by an ITA without
  auxiliary clocks.
\end{proposition}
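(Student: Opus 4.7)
The plan is to combine a concrete construction with an already known impossibility result. The positive direction — that $L$ is accepted by some ITA with a single level and one auxiliary clock — is witnessed directly by the automaton $\A_1$ displayed in Figure~\ref{fig:expressiveness}. The negative direction — that $L$ cannot be accepted by any ITA without auxiliary clocks — is precisely the result quoted from~\cite{berard12}. Hence the proof reduces to verifying $\La(\A_1) = L$.

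To carry out this verification I would first describe the intended role of each clock. The main clock $x$ enforces the integer-time occurrences of $a$: every $a$-transition has guard $x = 1$ and resets $x$, so inductively $t_{2i+1} = i+1$ for every $i$. Between an $a$ at time $i+1$ and the subsequent $b$, the clock $x$ measures the elapsed time, so the guard $0 < x < 1$ on the $b$-transition encodes $i+1 < t_{2i+2} < i+2$. The auxiliary clock $y$ is used to propagate the previous $b$-delay: on the $b$-transition one additionally imposes $x < y$ and updates $y := x$, which is permitted by Definition~\ref{def:ita} because $y$ is an auxiliary clock and both clocks sit at the same (and only) level. The initial round ($i=0$) is handled by a separate $b$-transition without the $x < y$ guard, corresponding to the fact that no previous interval exists.

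The two inclusions $\La(\A_1) \subseteq L$ and $L \subseteq \La(\A_1)$ then follow by a direct induction on the number $p$ of $(ab)$-blocks executed, using the loop invariants that after the $i$-th $b$-transition the valuations satisfy $x = 0$ and $y = t_{2i+2} - t_{2i+1}$. Combining $\La(\A_1) = L$ with the result of~\cite{berard12} yields the proposition.

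I do not expect a serious obstacle. The only points that require a brief check are (i) that each guard lies in $\C(X_1, X_{<1}) = \C(X_1, \emptyset)$, which holds since $x = 1$, $0 < x < 1$ and $x < y$ are all atomic constraints over $X_1 = \{x, y\}$ involving at most one occurrence of the main clock or a comparison between clocks of the same level; and (ii) that the update $y := x$ is allowed by the last bullet of Definition~\ref{def:ita} (clock replaced by another clock at the same level, under the condition that the updated clock is auxiliary), which is satisfied here.
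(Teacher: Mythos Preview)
Your approach is exactly that of the paper: exhibit $\A_1$, identify its language as $L$, and invoke the result of~\cite{berard12} that $L$ is not accepted by any ITA without auxiliary clocks. One small correction to your invariant: $x$ is not reset on the $b$-transitions, so immediately after the $i$-th $b$ one has $x = y = t_{2i+2} - t_{2i+1}$ rather than $x = 0$; this does not affect the argument.
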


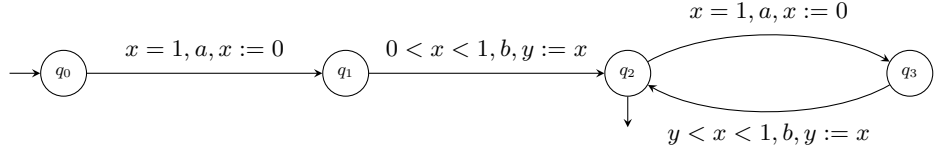
\begin{figure}[ht]
\centering
\begin{tikzpicture}[node distance=5cm,auto]
\tikzstyle{every state}+=[scale=0.75]
\node[state,initial] (q0) at (0,0) {$q_0$};
\node[state] (q1) [right of=q0] {$q_1$};
\node[state,accepting,accepting where=below] (q2) [right of=q1] {$q_2$};
\node[state] (q3) [node distance=5cm,right of=q2] {$q_3$};

\path[->] (q0) edge node {\timedtrans{}{}{x=1, a, x:=0}} (q1);
\path[->] (q1) edge node {\timedtrans{}{}{0<x<1, b, y:=x}} (q2);
\path[->] (q2) edge [bend left,looseness=0.75] node {\timedtrans{}{}{x=1, a, x:=0}} (q3);
\path[->] (q3) edge [bend left,looseness=0.75] node 
{\timedtrans{y<x<1,b,y:=x}{}{}} (q2);
\end{tikzpicture}
\caption{ITA $\A_1$ with an auxiliary clock}
\label{fig:expressiveness}
\end{figure}


Adding auxiliary clocks also has an impact on the complexity
of decision problems for ITA. In~\cite{berard12}, it is shown
that the state reachability problem is in PTIME for a fixed
number of levels without auxiliary clocks. 
The next proposition establishes a lower bound for this
problem in ITA with a single level.

\begin{proposition}\label{prop:pspacehard}
The state reachability problem for ITA with a single level
is PSPACE-hard.
\end{proposition}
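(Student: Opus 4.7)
The plan is to reduce from the acceptance problem for a polynomial-space deterministic Turing machine $M$ on an input $w$ of length $n$, a canonical PSPACE-complete problem. Given such an instance, I would construct in polynomial time a single-level ITA $\A_{M,w}$ such that a distinguished target state is reachable iff $M$ accepts $w$.

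The encoding uses $n$ auxiliary clocks $y_1, \ldots, y_n$, one per tape cell. Each tape symbol $a \in \Gamma$ is assigned a distinct rational constant $c_a$, and the symbol currently written at cell $i$ is represented by the invariant $y_i = c_a$. The main clock $x_1$ is declared active in every state, so that time passage never affects the encoded configuration. Discrete states of $\A_{M,w}$ are pairs $(q, j)$ where $q$ ranges over the control states of $M$ and $j \in \{1, \ldots, n\}$ is the head position, giving $O(|Q_M| \cdot n)$ states in total, plus an initial state and one distinguished target state $q_{\text{acc}}$.

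From the initial state a single transition with the conjunctive update $\bigwedge_{i=1}^n y_i := c_{w_i}$ leads to $(q_0^M, 1)$ and loads the input (such constant-assignment updates are available because at level $1$ the expression $\sum_{j<1} a_j x_j + b$ degenerates to $b \in \Q$). For each TM transition $\delta(q, a) = (q', a', D)$ and each head position $j$, I would add an ITA transition from $(q, j)$ to $(q', j \pm 1)$ with guard $y_j = c_a$ (an atomic constraint of the form $y_j - c_a \rel 0$, which belongs to $\C(X_1, \emptyset)$) and update $y_j := c_{a'}$, leaving all other clocks unchanged. A silent transition from each $(q_f, j)$ with $q_f$ accepting then leads to $q_{\text{acc}}$.

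The main difficulty, rather than being deeply technical, is staying within the restricted syntax at a single level: since no lower level exists, only constant and copy updates are available on clocks, and guards reduce to comparing a single clock either to a rational constant or to another clock. The symbol-as-constant encoding fits these primitives exactly, so the simulation is step-for-step faithful and $q_{\text{acc}}$ is reachable in $\A_{M,w}$ iff $M$ accepts $w$. Since the construction has $O(|Q_M| \cdot n)$ states, $O(|Q_M| \cdot n \cdot |\Gamma|)$ transitions and $O(n)$ clocks, it is polynomial in the input, establishing the PSPACE-hardness claim.
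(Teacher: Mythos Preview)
Your reduction is correct. Both your proof and the paper's exploit the same core observation: at a single level, auxiliary clocks are frozen under time elapse (only the main clock $x_1$ is active), so they can store discrete data, and the allowed guards $y_j - c = 0$ and updates $y_j := c$ suffice to test and modify that data.

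The difference lies in the source problem. The paper reduces from the \emph{planification problem} (reachability in a system of $n$ propositional variables with guarded update rules): each Boolean variable is encoded by one auxiliary clock with value in $\{0,1\}$, each rule becomes a self-loop on a single state $q_0$, and a single transition to a target state $q_1$ checks that all clocks equal $1$. This gives an ITA with only \emph{two} control states, highlighting that the hardness comes entirely from the number of auxiliary clocks rather than from the discrete structure. Your Turing-machine reduction is equally valid and more self-contained (no auxiliary PSPACE-complete problem needs to be invoked), at the cost of $O(|Q_M|\cdot p(n))$ control states. One small slip: you should allocate one clock per tape cell of the space bound $p(n)$, not per input cell; alternatively, reduce from linear-bounded-automaton acceptance so that $n$ cells genuinely suffice. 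Either way the construction remains polynomial.
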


\begin{proof}
  We proceed by reducing the planification problem to our reachability
  problem. The planification problem is defined by $n$ propositional
  variables $p_1,\ldots,p_n$ and a set $R$ of $m$ rules.
  Each rule $r \in R$ is defined by a guard $\bigwedge_{j=1}^{k} l_j$,
  with litterals $l_j \in \{p_1,\neg p_1,\ldots,p_n,\neg p_n\}$, and
  an update $\bigwedge_{j=1}^{h} p_{\alpha_j} := b_j$ with $b_j\in
  \{{\bf false},{\bf true}\}$. Initially all propositions are false
  and the planification problem consists in deciding whether there
  exists a sequence of rules $r_{1}\ldots r_{k}$ applicable from the
  initial state and leading to the state where all propositions are
  true.

\noindent
The corresponding ITA has $n$ auxiliary clocks $y_1,\ldots,y_n$ and
two states $q_0$ (the initial one) and $q_1$ (the final one) both with
active clock $x_1$. Each rule yields a transition looping around $q_0$
and an additional transition from $q_0$ to $q_1$ ``checking'' that the
goal has been reached. This reduction is illustrated in
Figure~\ref{fig:reduction}.
\end{proof}

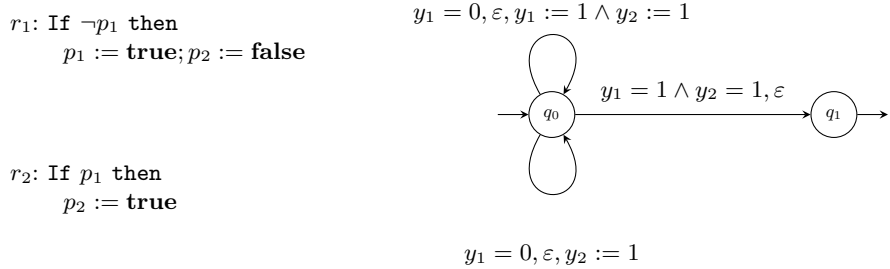
\begin{figure}[ht]
\centering
\begin{tikzpicture}[node distance=5cm,auto]
\tikzstyle{every state}+=[scale=0.75]

\node[text width=5cm] at (-4.7,1) 
{$r_1$: {\tt If} $\neg p_1$ {\tt then}\\ \hspace*{0.6cm} $p_1:={\bf true};p_2:={\bf false}$};

\node[text width=5cm] at (-4.7,-1) 
{$r_2$: {\tt If} $p_1$ {\tt then}\\ \hspace*{0.6cm} $p_2:={\bf true}$};

\node[state,initial] (q0) at (0,0) {$q_0$};
\node[state, accepting] (q1) [right of=q0] {$q_1$};

\path[->] (q0) edge node {\timedtrans{}{}{y_1=1 \wedge y_2=1, \eps}}
(q1); 
\path[->] (q0) edge [loop above, in=60,out=120,looseness=10] node[above=1pt] {$y_1=0,
    \eps, y_1:=1 \wedge  y_2:=1$} (q0); 
\path[->] (q0) edge [loop below, in=-60,out=-120,looseness=10] node[below=16pt] {$y_1=0, \eps, y_2:=1$} (q0);

\end{tikzpicture}
\caption{Illustrating the reduction for PSPACE-hardness}
\label{fig:reduction}
\end{figure}


\section{Reachability analysis of ITA}
\label{sec:reach-ita}

We prove in this section that the untimed language of an ITA is a
regular language for which a finite automaton can effectively be
built. Similarly to previous cases, the proof is based on the
construction of a (finite) class graph which is time abstract
bisimilar to the transition system $\T_{\A}$.  This result also holds
for infinite words with standard B\"uchi conditions.  As a
consequence, we obtain decidability of the reachability problem, as
well as decidability for plain ${\sf CTL}^*$ model-checking.

The construction of classes is much more involved than in the case of
TA. More precisely, it depends on the expressions occurring in the
guards and updates of the automaton (while in TA it depends only on
the maximal constant occurring in the guards). Given an ITA $\A$ with
$n$ levels, we associate with each state $q$ of $\A$ a set of
expressions $Exp(q)$ with the following meaning. The values of clocks
giving the same ordering of these expressions correspond to a class.
In order to define $Exp(q)$, we first build a family of sets
$\{E_k\}_{1\leq k\leq n}$ and set $Exp(q)= \bigcup_{k\leq \lambda(q)}
E_k$.  Finally we show in Theorem~\ref{prop:reach} how to build the
class graph which proves the regularity of the untimed language.  This
immediately yields a reachability procedure given in
Theorem~\ref{prop:reachita}.

\subsection{Construction of $\{E_k\}_{k\leq n}$}
\label{subsec:contructionexpressions}

\medskip We first recall the \emph{normalization}
operation~\cite{berard12}, on expressions relative to some level. As
explained below, this operation will be used to order expression
values at a given level.

\begin{definition}[Normalization]
  Let $k\leq n$ and $C=\sum_{i\leq k}a_ix_i+b$ be an expression over
  clocks in $X_{< k+1}$, the \emph{$k$-normalization} of $C$, denoted
  by ${\tt norm}(C,k)$, is defined by:
\begin{itemize}
	\item if $a_k\neq 0$ 
then ${\tt norm}(C,k)=x_k+(1/a_k)(\sum_{i<k}a_ix_i+b)$;
	\item else ${\tt norm}(C,k)=C$. 
\end{itemize}
\end{definition}

Let $C \rel 0$ be a guard occurring in a transition outgoing from a
state $q$ with level $k$ and $C= a_k z+\sum_{i<k}a_ix_i+b$ with $z \in
X_k$ (in the saturation procedure we do not consider guards of the
form $z-z'$ with $z,z'$ in $X_k$).  By rescaling the expression and if
necessary changing the comparison operator we may assume that $C$ is
written as $\alpha z+\sum_{i<k}a_ix_i+b$, with $\alpha \in \{0,1\}$.

\bigskip The construction of $\{E_k\}_{k\leq n}$ must be adapted to
handle auxiliary clocks. It proceeds top down from level $n$ to level
$1$ after initialization $E_k=X_k \cup \{0\}$ for all $k$.  When level
$k$ is handled, new terms are added to $E_i$ for $1\leq i\leq k$.
These expressions are those needed to compute a (pre)order on the
expressions in $E_k$.

\begin{enumerate}
\item At level $k$, first for each expression $\alpha
  z+\sum_{i<k}a_ix_i+b$ (with $\alpha \in \{0,1\}$ and $z\in X_k$)
  occurring in a guard of an edge leaving a state of level $k$, we add
  $-\sum_{i<k}a_ix_i-b$ to $E_k$.
\item Then the following procedure is iterated until no new term is
added to any $E_i$ for $1\leq i\leq k$.
	\begin{enumerate}
        \item Let $q \tr{\fee,a,u} q'$ with $\lambda(q)\geq k$ and
          $\lambda(q')\geq k$. For any $C \in E_{k}$, we add $C[u]$ to
          $E_{k}$. Observe that due to our restrictions on updates
          $C[u]$ is still either of the form $z \in X_k$ or of the
          form $\sum_{j<k}a_jx_j+b$.
      \item Let $q \tr{\fee,a,u} q'$ with $\lambda(q) < k$ and
        $\lambda(q') \geq k$.  Let $C$ and $C'$ be two different
        expressions in $E_{k}$. We compute $C''={\tt
          norm}(C[u]-C'[u],\lambda(q))$, choosing an arbitrary order
        between $C$ and $C'$ in order to avoid redundancy. Let us
        write $C''$ as $\alpha
        x_{\lambda(q)}+\sum_{i<\lambda(q)}a_ix_i+b$ with $\alpha \in
        \{0,1\}$. Then we add $-\sum_{i<\lambda(q)}a_ix_i-b$ to
        $E_{\lambda(q)}$.
	\end{enumerate}
\end{enumerate}


\begin{lemma}
\label{prop:terminate}
For an ITA $\A$, let $H$ be the number of constraints in the guards,
$U$ the number of updates in the transitions (we assume $U \geq 2$)
and $M= \textrm{max}\{ card(X_k) \mid 1 \leq k \leq n\}$. The construction
procedure of $\{E_k\}_{k\leq n}$ terminates and the size of every
$E_k$ is bounded by $(H+M)^{2^{n-k}}\times U^{2^{n(n-k+1)}}$.
\end{lemma}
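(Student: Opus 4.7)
My plan is to proceed by descending induction on the level $k$, from $k=n$ down to $k=1$, simultaneously establishing termination of the construction and the claimed size bound on $E_k$.

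For the base case $k=n$, after the initialisation $E_n = X_n \cup \{0\}$ and Step~1, which adds one linear expression per guard at a state of level $n$, we have $|E_n| \leq M + 1 + H$. Step~2(a) then closes $E_n$ under the update operations of transitions at level $n$. The key structural property of the model is that updates form an upper-triangular rewriting system on main clocks: an update of $x_j$ maps it either to $x_j$ itself, to a swap with another clock at level $j$ (when allowed), or to a linear expression over $x_1, \ldots, x_{j-1}$ and a constant. Iterating substitutions therefore stabilises, and the resulting closure has size bounded by a suitable power of $U$ times the pre-closure size.

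For the inductive step, assume the bound holds for $E_{k+1}, \ldots, E_n$. The final $E_k$ receives three kinds of contributions: the $O(H+M)$ initialisation from Step~1; one normalised expression per ordered pair of elements of some $E_{k'}$ (with $k' > k$) composed with a suitable update via Step~2(b), summing to at most $\sum_{k' > k} U \cdot |E_{k'}|^2$ expressions at level $k$; and finally the closure under Step~2(a) applied at level $k$. The quadratic dependence on $|E_{k'}|$ is what squares the bound on $|E_{k+1}|$ at each descent, producing the double-exponential $(H+M)^{2^{n-k}}$ factor after $n-k$ iterations, while the accumulated $U$-exponents from Step~2(a) closures at each level and from the propagation through Step~2(b) yield the $U^{2^{n(n-k+1)}}$ factor.

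The main technical obstacle is the precise analysis of Step~2(a): a priori, iterated substitutions under $U$ distinct updates could generate a large number of distinct linear combinations. The structural lemma one must prove is that this closure is finite and its size is bounded by a polynomial exponent of $U$ times the pre-closure size, using the upper-triangular structure of the updates. Once that lemma is in place, combining it with the quadratic blow-up of Step~2(b) and solving the resulting recurrence inequality yields the stated double-exponential bound.
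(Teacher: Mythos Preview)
Your overall strategy matches the paper's: descending induction on $k$, handling Step~1 (guards), Step~2(a) (closure under updates), and Step~2(b) (pairwise differences pushed down) separately, with the quadratic blow-up in Step~2(b) producing the $(H+M)^{2^{n-k}}$ factor.

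The gap is in your treatment of Step~2(a). You correctly flag it as the main obstacle and point to the upper-triangular structure of updates, but you stop at ``iterating substitutions therefore stabilises'' and claim a ``polynomial exponent of $U$'' without justifying either statement. The paper's argument here is more than a structural observation: it views the closure as a forest (each new expression $C[u]$ has parent $C$), invokes K\"onig's lemma to reduce termination to bounding branch length, and then shows that along any branch $C_0, C_1, \ldots$ with $C_{i+1} = C_i[u_i] \neq C_i$, clocks at each level can be eliminated at most once; a recursive splitting over the levels bounds the branch length by $2^k$, not by anything polynomial in $k$. This yields $|E_k| \leq |E_k^0|\cdot U^{2^k}$ after Step~2(a). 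Without the exponential $2^k$ factor you cannot recover the stated $U^{2^{n(n-k+1)}}$ term: a polynomial exponent at each level would give a strictly smaller bound and the recurrence would not close to the announced formula. So the missing ingredient is precisely the K\"onig/branch-length argument and the resulting $U^{2^k}$ multiplicative factor; once you supply that, the recurrence you sketch from Step~2(b) does produce the claimed bound along the lines of the paper's computation.
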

\begin{proof}
  Given some $k$, we prove the termination of the stage relative to
  $k$. Observe that step 2 of the iteration only adds new expressions
  to $E_{h}$ for $h<k$. Thus steps 1 and 2 can be ordered.  Let us
  prove the termination of step 1. We define $E_{k}^{0}$ as the set
  $E_k$ at the beginning of this stage and $E_{k}^{i}$ as this set
  after insertion of the $i^{th}$ item in it. With each added item
  $C[u]$ can be associated its \emph{father} $C$. Thus we can view
  $E_{k}$ as an increasing forest with finite degree (due to the
  finiteness of the edges) and finitely many roots. Assume that this
  step does not terminate. Then we have an infinite forest and by
  K\"onig lemma, it has an infinite branch $C_0,C_1,\ldots$ where
  $C_{i+1}=C_i[u_i]$ for some update $u_i$ such that $C_{i+1}\neq
  C_i$. Observe that updates of the form $x:= x'$ do not modify the
  set. Moreover, the number of updates that change the variables $x
  \in X_k$ is either 0 or 1 since once $x$ disappears it cannot appear
  again.  We split the branch into two parts before and after this
  update or we still consider the whole branch if there is no such
  update. In these (sub)branches, we conclude with the same reasoning
  that there is at most one update that change the variables $x \in
  X_{k-1}$. Iterating this process, we conclude that the number of
  updates is at most $2^k-1$ and the length of the branch is at most
  $2^k$.

  The final size of $E_{k}$ is thus at most $E_{k}^{0}\times U^{2^k}$
  since the width of the forest is bounded by $U$.
  In step 2, we add at most $U \times (|E_{k}|\times(|E_{k}|-1))/2$ to
  $E_{i}$ for every $i<k$. This concludes the proof of termination.

  \bigskip We now prove by a backward induction that as soon as $n\geq
  2$, $|E_{k}|\leq (H+M)^{2^{n-k}}\times U^{2^{n(n-k+1)}}$.  The
  doubly exponential size of $E_n$ (proved above) is propagated
  downwards by the saturation procedure. We define $p_k = |E_{k}|$.

\paragraph{Basis case $k=n$.}
We have $p_n \leq p_n^{0}\times U^{2^n}$ where $p_n^{0}$ is bounded by
$H+M$, hence $p_n \leq (H+M)\times U^{2^n}$ which is the claimed
bound.

\paragraph{Inductive case.}
Assume that the bound holds for $k < j \leq n$.  Due to all executions
of step 2 of the procedure at strictly higher levels, $p_k^0$
expressions were added to $E_k$, with:
\begin{eqnarray*}
p_k^{0} &\leq& (H+M) + U \times \left[(p_{k+1}\times (p_{k+1}-1))/2 + \cdots + (p_{n}\times (p_{n}-1))/2\right]\\
p_k^{0} &\leq& (H+M) + U \times \left[ (H+M)^{2^{n-k}}U^{2^{n(n-k)+1}} + \cdots + (H+M)^2 U^{2^{n +1}}\right]\\
p_k^{0} &\leq& (n-k+1) \times (H+M)^{2^{n-k}}U^{2^{n(n-k)+1}} \quad\textrm{(replacing all terms by the largest) }\\
p_k^{0} &\leq& (H+M)^{2^{n-k}}\times U^{2^{n(n-k+1)+n}} \quad\textrm{(here we use } U \geq 2 \textrm{ and } n \geq 2)
\end{eqnarray*}
Taking into account step 1 of the procedure for level $k$, we
have: \[p_k \leq (H+M)^{2^{n-k}}\times U^{2^{n(n-k)+1}+2^k+n}.\] Let
us consider the term $\delta = 2^{n(n-k+1)}-2^{n(n-k)+1}-2^k-n =
2^{n(n-k)+1}(2^{n-1}-1) - 2^k -n$. We have $\delta \geq 2^{n+1} - 2^n \geq 0$,
which yields the claimed bound.
\end{proof}

In order to analyze the space requirements triggered by the saturation
procedure, we establish the following lemma bounding the number of
bits used for integers involved in the rational constants of
expressions in all $E_k$.

\begin{lemma}
\label{lemma:size}
Let $\A$ be an ITA, and let $b_0$ be the maximal number of bits for
integers occurring in
$\A$. If $b$  is the number of bits of an integer constant, 
occurring in an expression of some $E_k$, then $b \leq ((n+1)!)^29^{n}b_0$.
\end{lemma}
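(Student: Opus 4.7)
My plan is to proceed by reverse induction on the level $k$, from $n$ down to $1$, bounding $\beta_k$, the maximum bit size of any integer (numerator or denominator) appearing in a rational coefficient of an expression of $E_k$ after the saturation procedure has completed level $k$. The lemma then follows by taking the maximum of $\beta_1, \ldots, \beta_n$.

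The first step is to quantify the effect on bit size of each elementary operation used by the procedure. Representing a rational as a pair numerator/denominator, the product of two rationals has bit size equal to the sum of their bit sizes, while their sum adds at most one bit to the larger operand. Applying an update $u$ from $\A$ to an expression $C$ substitutes at most $M$ linear expressions from $u$ into $C$, so each new coefficient of $C[u]$ is a sum of at most $M$ products of a coefficient of $C$ by a coefficient of $u$; this increases the bit size additively by $b_0 + O(\log M)$. Normalization ${\tt norm}(C,h)$ divides every coefficient by the leading one, and so at most doubles each coefficient's bit size.

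I then trace these effects through the procedure at level $k$. Just before step~2a begins at level $k$, the set $E_k$ contains the elements of $X_k \cup \{0\}$ and the expressions added by step~1 (all of bit size $\leq b_0$), together with the terms injected by step~2b during the processing of every higher level $k' > k$. Each such injected term is obtained by one substitution, one subtraction, and one normalization applied to two expressions of $E_{k'}$, hence has bit size at most $2(\beta_{k'} + b_0 + O(\log M))$. Step~2a at level $k$ then propagates this initial content along the forest whose depth was shown in the proof of Lemma~\ref{prop:terminate} to be at most $2^k$; each step of a branch applies one update and so adds at most $b_0 + O(\log M)$ bits. Combining these contributions yields a recurrence of the form $\beta_k \leq 2\beta_{k+1} + O(2^k b_0 + \log M)$ with base case $\beta_n \leq O(2^n b_0)$.

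Unrolling the recurrence gives $\beta_1 \leq c(n) \cdot b_0$ with $c(n)$ exponential in $n$. The main technical obstacle is the bookkeeping required to fit the resulting constants into the stated closed form $((n+1)!)^2 9^n b_0$: this calls for carefully distributing the overheads of normalization and substitution across the $n$ levels and using the assumptions $n \geq 2$ and $U \geq 2$ from Lemma~\ref{prop:terminate} to absorb the $\log M$ terms and additive offsets into the factorial prefactor. I do not expect any conceptual difficulty beyond this accounting, since both the doubly exponential forest length and the doubling from normalization are already isolated and bounded.
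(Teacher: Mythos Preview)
Your overall plan---induction on the level, bounding the bit growth contributed by each of the three operations (update substitution, subtraction, normalisation)---is exactly the paper's strategy. The paper also fixes a single common denominator $s$ for all update coefficients up front, which you may want to adopt since it simplifies the arithmetic without loss (polynomial blow-up only).

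There is, however, a concrete gap in your arithmetic. You write that ``their sum adds at most one bit to the larger operand''. For rationals represented as numerator/denominator pairs this is false in general: $\tfrac{p_1}{q_1}+\tfrac{p_2}{q_2}=\tfrac{p_1q_2+p_2q_1}{q_1q_2}$ roughly \emph{adds} the bit sizes rather than incrementing by one. Your bound on a single update---``bit size increases additively by $b_0+O(\log M)$''---therefore only holds when all coefficients of $C$ already share a common denominator, and you never establish or maintain that invariant. After a normalisation, or after the subtraction $C[u]-C'[u]$ of two expressions with different denominators in step~2(b), the invariant can fail, and each such step contributes a \emph{multiplicative} factor to the bit size, not an additive one.

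This is precisely where the $((n+1)!)^2$ in the target bound comes from. The paper does not try to preserve a single denominator; instead, at level $n-k$ it first puts the (at most $n-k+1$) coefficients of an expression on a common denominator, costing a factor $n-k+1$ in bit size, and a second factor $n-k+1$ appears when analysing the products arising in step~2(a). The resulting recurrence is $b_{k+1}\le (n-k+1)^2\cdot 9\,b_k$, whose product over $k$ gives $((n+1)!)^2 9^n$. Your recurrence $\beta_k\le 2\beta_{k+1}+O(2^k b_0)$ would unroll to $O(4^n b_0)$, strictly smaller than the stated bound; this discrepancy is a signal that the additive accounting for sums is losing a genuine multiplicative contribution. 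Either track the common-denominator invariant explicitly through every operation (and prove it survives subtraction and normalisation), or follow the paper and absorb a factor of roughly $(n-k+1)$ per level for the denominator alignment.
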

\begin{proof}
 Without loss of generality we assume that $b_0\geq 2$.
 We also assume that there is a single denominator $s$ for the rationals
  occurring in updates since it only induces a polynomial blow up.

 \noindent Let $b_{k}$ be the number of bits of an integer occurring
 in some expression before operations of level $n-k$ are performed.
 We establish a relation between $b_k$ and $b_{k+1}$.  At level $n-k$,
 step 1 involves a normalization on guards. Thus a numerator is
 multiplied by a denominator to produce the new integers leading to a
 number of bits $2b_k$.
 For an expression that was already present in $E_{n-k}$, 
 its coefficients are modified in order to get a common denominator
 by taking the product of the original denominators.
  After this transformation the maximal number of bits
  is bounded by $(n-k+1)b_k$.
  
  \noindent Let $C=\sum_{i\leq {n-k}}a_ix_i+b$ be an expression built
  after step 2(a).  Examining the successive updates, the coefficient
  $a_i$ can be expressed as $\sum_{d\in \mathcal D} \prod_{j \in d}
  c_{d,j}$ where $\mathcal D$ is the set of subsets of
  $\{i,\ldots,n-k\}$ containing $i$ and $c_{d,j}$ are either
  coefficients of the updates or coefficients of an expression built
  before this step. The same reasoning applies to $b$.  Before summing
  the products over $d \in \mathcal D$, the integers are transformed
  in order to get the same denominator by multiplying every
  denominator (and corresponding numerator) by $s^i$ with $0\leq i
  \leq n-k$.  So the maximal absolute value of the numerator of such a
  coefficient is bounded by
  $2^{n-k}(2^{(n-k+1)b_k})^{n-k+1}2^{(n-k)b_0} \leq
  (2^{2b_k+1})^{(n-k+1)^2}$ which implies a maximal number of bits
  equal to $(n-k+1)^2(2b_k+1)$ for the numerators of the $a_i$'s and
  $b$. The maximal absolute value of the denominator of such a
  coefficient is less than $(2^{(n-k+1)b_k})^{n-k+1}2^{(n-k)b_0}$
  which implies a maximal number of bits bounded by $(n-k+1)^2(2b_k)$
  for the denominators of the $a_i$'s and $b$.

\noindent At step 2(b), the difference $C[u]-C'[u]$ requires to
compute the lcm of two denominators (bounded by their product). So the
difference operation leads to a bound $(n-k+1)^2(4b_k+2)$ for the
numerators of its coefficients and  $(n-k+1)^2(4b_k)$ for the denominators.

\noindent The final step 2(b) consists in multiplying a numerator and
a denominator of some coefficients leading to a bound
$(n-k+1)^2(8b_k+2)\leq (n-k+1)^2(9b_k)$ for $b_{k+1}$, which yields the
desired bound.

\end{proof}

\subsection{Construction of the class automaton}
\label{subsec:contructiongraph}

In order to analyze the size of the class automaton defined below, we
recall an adaptation of a classical result about partitions of
$n$-dimensional Euclidian spaces.

\begin{definition}
  Let $\{H_k\}_{1\leq k \leq m}$ be a family of hyperplanes of
  $\R^n$. A \emph{region} defined by this family is a connected
  component of $\R^n \setminus \bigcup_{1\leq k \leq m} H_k$.  An
  \emph{extended region} defined by this family is a connected
  component of $\bigcap_{k \in I} H_k \setminus \bigcup_{k \notin I}
  H_k$ where $I \subseteq \{1,\ldots, m\}$ with the convention that
  $\bigcap_{k \in \emptyset} H_k=\R^n$.
\end{definition}

\begin{proposition}\label{prop:zas+}~\\
1. \cite{zas75}
  The number of regions defined by the family $\{H_k\}_{1\leq k \leq
    m}$ is at most $\sum_{i=0}^n \binom{m}{i}$.\\
2. \cite{berard12} The number of extended regions
  defined by the family $\{H_k\}_{1\leq k \leq m}$ is at most:\\
  $\sum_{p=0}^n\binom{m}{p}\sum_{i=0}^{n-p} \binom{m-p}{i}\leq
  e^2m^n$.
\end{proposition}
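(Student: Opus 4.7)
The plan splits the statement in two. Part~1 is Zaslavsky's classical bound, which I would prove by induction on $m$, while part~2 reduces to part~1 by grouping extended regions according to the subset $I$ of hyperplanes that contain them.

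For part~1, let $r(n,m)$ denote the maximum number of regions defined by $m$ hyperplanes in $\R^n$. The key step is the recurrence $r(n,m+1) \leq r(n,m) + r(n-1,m)$, obtained by adding a hyperplane $H_{m+1}$ to an arrangement of $m$ hyperplanes: the traces $H_k \cap H_{m+1}$ form an arrangement of at most $m$ hyperplanes inside $H_{m+1} \cong \R^{n-1}$, and each region of this induced arrangement corresponds to exactly one ambient region that $H_{m+1}$ splits in two. The closed bound $r(n,m) \leq \sum_{i=0}^n \binom{m}{i}$ then follows by induction on $m$ using Pascal's identity $\binom{m}{i} + \binom{m}{i-1} = \binom{m+1}{i}$, with base cases $r(n,0) = 1$ and $r(0,m) = 1$.

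For part~2, I would first reduce to the general-position case. The map sending an arrangement (parameterised by the coefficients of its hyperplanes) to its number of extended regions is lower semicontinuous: a degenerate arrangement arises as a limit of generic ones and extended regions can only merge or disappear in the limit, never split. Hence the maximum over all arrangements of $m$ hyperplanes is attained in general position. Under that assumption, for every $I \subseteq \{1,\ldots,m\}$ of size $p \leq n$, the intersection $\bigcap_{k \in I} H_k$ is an affine subspace of dimension exactly $n-p$, and is empty for $p > n$. Restricting the other $m-p$ hyperplanes to this subspace yields an arrangement of at most $m-p$ hyperplanes in dimension $n-p$, bounded via part~1 by $\sum_{i=0}^{n-p} \binom{m-p}{i}$ regions. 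Summing the $\binom{m}{p}$ such contributions over $0 \leq p \leq n$ gives the first inequality.

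For the closed-form bound, I would use the identity $\binom{m}{p}\binom{m-p}{i} = \binom{p+i}{p}\binom{m}{p+i}$. Substituting $k = p+i$ and collecting $\sum_{p=0}^k \binom{k}{p} = 2^k$ collapses the double sum to $\sum_{k=0}^n 2^k \binom{m}{k}$. From $\binom{m}{k} \leq m^k/k!$ together with $m^k \leq m^n$ (valid for $m \geq 1$ and $0 \leq k \leq n$), one obtains $\sum_{k=0}^n 2^k \binom{m}{k} \leq m^n \sum_{k=0}^n 2^k/k! \leq e^2 m^n$. The main obstacle is justifying the general-position reduction rigorously; if the semicontinuity argument proves awkward to formalise, the alternative is to bound each $r_I$ directly by $\sum_{i=0}^{d_I} \binom{m-p}{i}$ with $d_I = \dim \bigcap_{k \in I} H_k$ and rearrange, exploiting the fact that degenerate configurations have fewer nonempty intersections to compensate for the possibility that $d_I > n-p$.
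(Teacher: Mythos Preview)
The paper does not actually prove this proposition: part~1 is attributed to Zaslavsky~\cite{zas75} and part~2 to~\cite{berard12}, with no argument given in the text. So there is no ``paper's proof'' to compare against; your proposal is a self-contained proof where the paper simply invokes the literature.

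Your argument is sound. The induction for part~1 is the standard one. For part~2, the reduction to the arrangement induced on $\bigcap_{k\in I}H_k$ is the right idea, and the manipulation $\binom{m}{p}\binom{m-p}{i}=\binom{p+i}{p}\binom{m}{p+i}$ followed by the substitution $k=p+i$ to collapse the double sum into $\sum_{k=0}^n 2^k\binom{m}{k}$ is clean and correct; the final estimate via $\binom{m}{k}\le m^k/k!$ and $\sum_k 2^k/k!\le e^2$ is fine (the edge case $m=0$ is irrelevant here).

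The one soft spot, which you already identify, is the general-position reduction. Your semicontinuity sketch is morally right---faces of an arrangement can only coalesce, not split, under specialisation---but making it precise takes some care (one route is to observe that every extended region in the degenerate arrangement contains a point that, for nearby generic arrangements, determines a unique face with the same sign pattern restricted to $I$, giving an injection). Your proposed alternative, bounding each nonempty $\bigcap_{k\in I}H_k$ by $\sum_{i=0}^{d_I}\binom{m-|I|}{i}$ and arguing that degeneracy trades larger $d_I$ against fewer nonempty $I$'s, is correct in spirit but would need a genuine combinatorial argument to close; the semicontinuity route is the cleaner of the two to complete.
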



\begin{theorem}
\label{prop:reach}
The untimed language of an ITA is regular.
\end{theorem}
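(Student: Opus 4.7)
The plan is to build a finite class automaton that is time-abstract bisimilar to $\T_\A$, whose untimed language will then coincide with $Untime(\La(\A))$. Two configurations $(q,v)$ and $(q,v')$ at the same state are declared equivalent when $v$ and $v'$ induce the same sign pattern on every difference $C-C'$ for $C,C'\in Exp(q)=\bigcup_{k\le\lambda(q)}E_k$. Each such equivalence class on the clock space $\R^X$ is, by definition, an extended region for the family of hyperplanes $\{C-C'=0\mid C,C'\in Exp(q)\}$. Proposition \ref{prop:zas+}(2) together with the bound on $|E_k|$ from Lemma \ref{prop:terminate} then guarantees finitely many classes per state, so the total number of classes of the putative quotient is finite (and in fact doubly exponential in the size of $\A$).

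Next I would prove stability under time elapsing at a state $q$ with active clock $z\in X_k$. Since only $z$ evolves, every expression of $E_i$ for $i<k$ keeps its value, and the only comparisons that may flip are between two expressions of $E_k$ in which $z$ occurs with distinct coefficients. The $k$-normalization of Definition 3 reduces any such comparison to a comparison between an affine expression in $X_{<k}$ and the value of $z$; by step 1 of the saturation at level $k$, the corresponding threshold expression $-\sum_{i<k}a_i x_i-b$ belongs to some $E_j$ with $j<k$, hence its sign (and, by repeated normalization down the levels, its linear order against all relevant sums) is already encoded in the current class. Consequently the successor class after any positive duration is determined solely by the current class, and time elapsing visits a well-defined, finite sequence of classes.

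Stability under discrete steps is the technically delicate part. Given $q\tr{\fee,a,u}q'$ with $v\models\fee$, I would argue that (i)~the truth of $\fee$ only depends on the class of $(q,v)$, because step 1 put the thresholds of every guard out of $q$ into $Exp(q)$; and (ii)~the class of $(q',v[u])$ only depends on the class of $(q,v)$, because every $C\in Exp(q')$ has the property that $C[u]$ is either itself in $Exp(q)$, or reduces to an expression whose sign is already determined by $Exp(q)$. Concretely, when $\lambda(q)\ge\lambda(q')$ this is secured by step 2(a), which closes each $E_k$ (for $k\le\lambda(q')$) under the updates of transitions staying above level $k$; when $\lambda(q)<\lambda(q')$, step 2(b) explicitly normalizes differences $C[u]-C'[u]$ for $C,C'\in Exp(q')$ and injects their thresholds into $E_{\lambda(q)}$. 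The delicate restrictions of Definition \ref{def:ita} on auxiliary clock updates (assignments $z:=z'$ only within the same level, or to a lower-level main linear expression, or leaving $z$ unchanged) are exactly what makes $C[u]$ a well-formed element of $\Li(X_k\cup X_{<k},\Q)$ to which the previous analysis applies.

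Combining (i)--(iii) I obtain a finite time-abstract bisimulation on $\T_\A$; the quotient is a finite automaton over $\Sigma\cup\{\eps\}$ whose accepted language equals $Untime(\La(\A))$, proving regularity. The main obstacle I anticipate is the verification at step (ii) for transitions that \emph{increase} the level: one has to trace through the inductive construction of the $E_k$'s to show that no new threshold on lower-level clocks is ever needed beyond those step 2(b) has already installed, and this is precisely where the asymmetric form of the update (with auxiliary clocks reset to $0$ going down, and bounded substitutions going up) prevents uncontrolled growth of the comparison set.
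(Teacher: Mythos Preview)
Your approach is the paper's: build the class automaton from the saturated sets $E_k$ and show it is a time-abstract bisimulation of $\T_\A$. Two technical points need correcting, however.

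First, the class should record a \emph{separate} total preorder $\preceq_k$ on each $E_k$ for $k\le\lambda(q)$, not a single sign pattern on all pairs from $Exp(q)=\bigcup_{k\le\lambda(q)}E_k$. The saturation procedure is tailored to per-level comparisons: step~2(a) closes $E_k$ under updates of transitions with both endpoints at level $\ge k$, and step~2(b) handles pairs $C,C'$ from the \emph{same} $E_k$ when the transition comes from below. Cross-level comparisons are neither needed for firability or successor determination nor guaranteed to be preserved by the saturation; tracking them would leave the discrete successor ill-defined when the level increases.

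Second, your time-step argument misplaces the thresholds. Step~1 adds $-\sum_{i<k}a_ix_i-b$ to $E_k$ itself, not to some $E_j$ with $j<k$. The point is that every element of $E_\ell$ other than the active clock $\act(q)$ is either a frozen auxiliary clock of $X_\ell$ or an expression over $X_{<\ell}$, hence constant while time elapses in $q$. The time successor is therefore obtained simply by moving $\act(q)$ one step through the preorder $\preceq_\ell$; no ``repeated normalization down the levels'' is required. Similarly, your case split for the discrete step should be on whether $k\le\ell=\lambda(q)$ (handled by step~2(a)) or $k>\ell$ (handled by step~2(b)), not on the comparison of $\lambda(q)$ with $\lambda(q')$.
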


\begin{proof}
  Starting from an ITA $\A$, and handling auxiliary clocks, we build a
  finite automaton which is time abstract bisimilar to the transition
  system $\T_{\A}$ and thus accepts $Untime(\La(\A))$.
\paragraph{Class definition.}
A state of the automaton, called class, is a syntactical
representation of a subset of reachable configurations. It is defined
as a pair $R=(q,\{\preceq_k\}_{1 \leq k \leq \lambda(q)})$ where $q$
is a state and $\preceq_k$ is a total preorder over $E_k$, for $1 \leq
k \leq \lambda(q)$.  The class $R$ describes the set of
configurations:
\[\sem{R}=\{(q,v) \mid  \forall k \leq \lambda(q)\ 
\forall (g,h) \in E_k,\ g[v] \leq h[v] \mbox{~iff~} g \preceq_k h\}\]

The initial state is the class $R_0$ such that $\sem{R_0}$ contains
$(q_0,{\bf 0})$ and can be straightforwardly determined. The final
states are all classes $R=\left(q,\{\preceq_k\}_{1 \leq k \leq
    \lambda(q)}\right)$ with $q \in Q_f$.



Observe that fixing a state, the set of configurations $\sem{R}$ of a
non empty class $R$ is exactly an extended region associated with the
hyperplanes defined by the comparison of two expressions of some
$E_k$. An upper bound for the total number of expressions of any level
is given by $(H+M)^{2^n} \times U^{2^{n^2}}$, hence an upper bound of
the of the number of hyperplanes is obtained by squaring this number,
yielding $(H+M)^{2^{n+1}} \times U^{2^{n^2}}$. Using Point 2. of
Proposition~\ref{prop:zas+}, the number of semantically different
classes for a given state is bounded by:
\begin{equation}
e^2m^n=e^2(H+M)^{K2^{n+1}} \times U ^{K 2^{n^2+1}}
\label{eq:number}
\end{equation}
where 
$K = \sum_{k=1}^n card(X_k) \leq nM$ is the total number of clocks.
Since semantical equality between classes can be tested in polynomial
time w.r.t. their size~\cite{RoTeVi97}, we implicitely consider in the
sequel of the proof classes modulo the semantical equivalence.

There are two kinds of transitions, corresponding to discrete steps
and abstract time steps.

\paragraph{Discrete step.}
Let $R=(q,\{\preceq_k\}_{1 \leq k \leq \lambda(q)})$ and
$R'=(q',\{\preceq'_k\}_{1 \leq k \leq \lambda(q')})$ be two classes. There is a
transition $R \tr{e} R'$ for a transition $e : q \tr{\fee,a,u} q'$ if
there is some $(q,v) \in\, \sem{R}$ and $(q',v') \in\, \sem{R'}$ such that
$(q,v) \tr{e} (q',v')$. In this case, for all $(q,v) \in\, \sem{R}$
there is a $(q',v') \in\, \sem{R'}$ such that $(q,v) \tr{e}
(q',v')$. This can be decided as follows.

\emph{Firability condition.} For a transition $e$ like above at level
$\ell = \lambda(q)$, write $\fee=\bigwedge_{j \in J} C_j \bowtie_j 0$.
Since we assumed rescaled guards, for every $j$, $C_j=\alpha
z+\sum_{i<k}a_ix_i+b$ (with $\alpha \in \{0,1\}$ and $z$ in $X_\ell$)
or $C_j=z-z'$ with $z,z' \in X_\ell$.  In the first case
$C'_j=-\sum_{i<\ell}a_ix_i-b$ and $z$ belong to $E_\ell$ and in the
second case $z,z' \in E_\ell$ both by construction. For each $j \in J$,
we define a condition depending on $\bowtie_j$. For instance, in the
first case if the constraint in $\fee$ is $C_j \leq 0$, we check that
$\alpha z \preceq_\ell C'_j$, or if the constraint in $\fee$ is $C_j > 0$
we check that $\alpha z \npreceq_\ell C_j' \wedge C_j' \preceq_\ell \alpha
z$. The second case is handled similarly.

\smallskip \emph{Successor definition.}
Class $R'$ is defined as follows. Let $k \leq \lambda(q')$ and $g,h \in E_k$.
\begin{enumerate}
\item Either $k \leq \ell$, then by construction, $g[u],h[u] \in
E_k$ then $g \preceq'_k h$ iff $g[u] \preceq_k h[u]$.
\item Or $k > \ell$, let $D=g[u]-h[u]$. Due to our restrictions on
  updates for $i\leq \ell$, $x_i[u]$ can only be equal to $x_i$ or
  $\sum_{j<i} \alpha_jx_j+\beta$.  Thus $D$ can be written as
  $\sum_{i\leq\ell}c_ix_i+d$. We set $C={\tt norm}(D,\ell)$ and write
  $C=\alpha x_{\ell}+\sum_{i<\ell}a_ix_i+b$ (with $\alpha \in
  \{0,1\}$).
  By construction, $C'=-\sum_{i<\ell}a_ix_i-b \in E_{\ell}$.\\
  When $c_{\ell}\geq 0$ then $g \preceq'_k h$ iff $\alpha
  x_{\ell} \preceq_{\ell} C' $.\\
  When $c_{\ell}< 0$ then $g \preceq'_k h$ iff $C'
  \preceq_{\ell} \alpha x_{\ell}$.

\end{enumerate}
By definition of $\sem{\,\cdot\,}$, we obtain:
\begin{itemize}
\item For any $(q,v) \in \sem{R}$, if there exists $(q,v) \tr{e}
(q',v')$ then the firability condition is fulfilled and $(q',v')$
belongs to $\sem{R'}$.
\item If the firability condition is fulfilled then for each $(q,v) \in
\sem{R}$ there exists $(q',v') \in \;\sem{R'}$ such that $(q,v)
\tr{e} (q',v')$.
\end{itemize}

\paragraph{Time step.}
Let $R=(q,\{\preceq_k\}_{1 \leq k \leq \lambda(q)})$, with again $\ell
= \lambda(q)$.  There is a transition $R \tr{succ} Post(R)$ for
$Post(R)=(q,\{\preceq'_k\}_{1 \leq k \leq \ell})$, the time
successor of $R$, which is defined as follows.

For every $i < \ell$, we define $\preceq'_i=\preceq_i$.  Let $\sim$ be
the equivalence relation $\preceq_{\ell} \cap \preceq^{-1}_{\ell}$
induced by the preorder. On equivalence classes, this (total) preorder
becomes a (total) order.  Let $V$ be the equivalence class containing
$\act(q)$.
\begin{enumerate}
\item Either $V=\{\act(q)\}$ and it is the greatest equivalence
  class. Then $\preceq'_{\ell}=\preceq_{\ell}$ (thus
  $Post(R)=R$).
\item Either $V=\{\act(q)\}$ and it is not the greatest equivalence
  class.  Let $V'$ be the next equivalence class. Then
  $\preceq'_{\ell}$ is obtained by merging $V$ and $V'$, and
  preserving $\preceq_{\ell}$ elsewhere.
\item Either $V$ is not a singleton. Then we split $V$ into
  $V\setminus \{\act(q)\}$ and $\{\act(q)\}$ and ``extend''
  $\preceq_{\ell}$ by $V \setminus \{\act(q)\}
  \preceq'_{\ell} \{\act(q)\}$.
\end{enumerate}
By definition of $\sem{\,\cdot\,}$, for each $(q,v) \in \sem{R}$,
there exists $d>0$ such that $(q,v+d) \in \sem{Post(R)}$ and for each $d$
with $0 \leq d' \leq d$, then $(q,v+d') \in \sem{R} \cup \sem{Post(R)}$.

\smallskip From the properties above, this finite automaton accepts
$Untime(\La(\A))$.


\end{proof}

\begin{theorem}
\label{prop:reachita}
The reachability problem for Interrupt Timed Automata is decidable and
belongs to 2-EXPTIME. It is in PTIME when the number of
clocks is fixed and PSPACE-complete when the number of levels is
fixed.
\end{theorem}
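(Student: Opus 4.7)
The plan is to reduce the reachability problem for $\A$ to the reachability of a class whose control state lies in $Q_f$ within the finite class automaton constructed in the proof of Theorem~\ref{prop:reach}. Since this automaton is time-abstract bisimilar to $\T_{\A}$, decidability is immediate and the three complexity bounds all follow from a size analysis of the class automaton based on Lemma~\ref{prop:terminate}, Lemma~\ref{lemma:size}, and the counting estimate~(\ref{eq:number}).

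For the general 2-EXPTIME upper bound, I would combine the doubly-exponential bound on each $|E_k|$ given by Lemma~\ref{prop:terminate} with the doubly-exponential bit-size of the rational coefficients ensured by Lemma~\ref{lemma:size}. Together with~(\ref{eq:number}), this shows that the class automaton has at most doubly-exponentially many states, each of doubly-exponential description length, and its transitions can be computed within the same bound from the explicit rules of Theorem~\ref{prop:reach}. Performing plain graph reachability on this automaton then fits into 2-EXPTIME.

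When the number $n$ of levels is fixed, the exponents $2^{n-k}$ and $2^{n(n-k+1)}$ in Lemma~\ref{prop:terminate} are constants, so every $|E_k|$ is polynomial in the input and a single class (a state of $\A$ together with one total preorder on each $E_k$) has a polynomial-size representation. Rather than building the possibly exponentially many classes, I would run an on-the-fly non-deterministic search that stores only the current class, guessing either a discrete-step successor or the time successor $Post(R)$ and checking the firability and successor definitions of Theorem~\ref{prop:reach} locally in polynomial space. This gives NPSPACE, hence PSPACE by Savitch; the matching lower bound is exactly Proposition~\ref{prop:pspacehard}. For the PTIME bound with a fixed number of clocks, note that $n$ is then also bounded since each level carries its main clock, so the exponents in both Lemma~\ref{prop:terminate} and~(\ref{eq:number}) become constants; the class automaton has polynomial size and is explored in polynomial time.

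The main obstacle is the PSPACE membership for fixed $n$: one must verify that firability tests and time/discrete successor computations prescribed in Theorem~\ref{prop:reach} only require manipulating the current preorders over the $E_k$ and the polynomially-bounded rationals bounded by Lemma~\ref{lemma:size}, so that the search really proceeds in polynomial working space without ever unfolding the exponentially many classes.
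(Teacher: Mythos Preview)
Your proposal is correct and follows essentially the same route as the paper: build the class automaton of Theorem~\ref{prop:reach}, bound its size via Lemma~\ref{prop:terminate}, Lemma~\ref{lemma:size} and~(\ref{eq:number}), then derive 2-EXPTIME globally, PSPACE for fixed $n$ by an on-the-fly NPSPACE search plus Savitch (with hardness from Proposition~\ref{prop:pspacehard}), and PTIME for a fixed number of clocks. One small inaccuracy: Lemma~\ref{lemma:size} gives a \emph{singly} exponential bit bound $((n+1)!)^2 9^n b_0$, not a doubly-exponential one; the doubly-exponential contribution to the class size comes solely from the number of expressions, but this does not affect your 2-EXPTIME conclusion.
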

\begin{proof}
  The reachability problem is solved by building the class graph and
  applying a standard reachability algorithm.  The number of
  expressions in the $E_k$'s is doubly exponential w.r.t the size of
  the model (see Lemma~\ref{prop:terminate}). The size of an
  expression is exponential w.r.t. the size of the model (see
  Lemma~\ref{lemma:size}).  So the size of a class representation is
  also doubly exponential in the size of the model.  The size of the
  graph, bounded by the number of semantically different classes, is
  only polynomial w.r.t. the size of a class due to Point 2. of
  Proposition~\ref{prop:zas+}.  This leads to a 2-EXPTIME complexity.
  Observe that no complexity gain can be obtained by a non
  deterministic search without building the graph.
  
  \noindent
  Again using these lemmas and Point 2. of
  Proposition~\ref{prop:zas+}, when the number of clocks is fixed the
  size of the graph is at most polynomial in the size of the
  problem, leading to a PTIME procedure.
 
  \noindent
  On the other hand, when the number of levels is fixed, the size of a
  class representation is polynomial while the number of classes is
  exponential (see $K$ in Equation~(\ref{eq:number})).  Thus a non
  deterministic search can be performed without building the graph,
  which yields a complexity in PSPACE. The PSPACE hardness is a
  consequence of Proposition~\ref{prop:pspacehard}.
\end{proof}

\noindent \textbf{Remarks.} This result should be compared with the
similar one for TA. The reachability problem for TA is PSPACE-complete
and thus less costly to solve than for ITA. 
Fixing the number of levels in ITA yields the same complexity.
Moreover, fixing the
number of clocks does not reduce the complexity for TA (when this
number is greater than or equal to $3$) while this problem belongs now
to PTIME for ITA. Summarizing, the main source of complexity for ITA is
the number of levels and clocks, while in TA it is the binary encoding of the
constants~\cite{courcoubetis92}.

\section{Parametric Interrupt Timed Automata}
\label{sec:pmodele}

Parametric ITA are similar to ITA but they include polynomials
of parameters from a set $P$, in guards and updates.  Given two sets
$F,G$, we denote by $\Po(F,G)$, the set of polynomials with variables
in $F$ and coefficients in $G$ and by $\Fr(F,G)$, the set of rational
functions with variables in $F$ and coefficients in $G$
(i.e. quotients of polynomials). Observe that $\Li(F,G)$ can be seen
as the subset of polynomials with degree at most one.

\begin{definition}
  A \emph{parametric interrupt timed automaton} (PITA) is a tuple
  $\A=\langle P, \Sigma, n, Q, q_0, Q_f,  \lambda, X,$ $\act, \Delta\rangle$,
  where:
\begin{itemize}
\item $P$ is a finite set of parameters, 
\item all other elements are defined as for ITA except that
  expressions appearing in guards or updates belong to
  $\Li(X,\Po(P,\Q))$: in such an expression $\sum_{z \in Z} a_z z+b$, the
  $a_z$'s and $b$ are polynomials over $P$ with coefficients in $\Q$.
%
\end{itemize}
\end{definition}

This definition implies that an ITA is a PITA with $P =
\emptyset$. When all expressions occurring in guards and updates are
in $\Li(X\cup P,\Q)$ (which can be seen as a subset of
$\Li(X,\Po(P,\Q))$), the PITA is said to be \emph{additively
  parametrised}. In contrast, in the general case, it is called
\emph{multiplicatively parametrised}.

\smallskip As in the unparametrized case, updates operate on
expressions.  For instance, for clocks in $X = \{x_1, x_2\}$,
parameters in $P=\{p_1,p_2,p_3\}$, expression $C= p_2x_2 -2x_1 + 3p_1$
and the update $u$ defined by $x_1 := 1 \wedge x_2 := p_3x_1 +p_2$,
applying $u$ to $C$ yields the expression $C[u] = p_2p_3x_1 + p_2^2 +
3p_1 -2$. Note that the use of multiplicative parameters for clocks
may result in polynomial coefficients when updates are applied.  Here
a \emph{clock valuation} is a mapping $v : X \mapsto \Po(P,\R)$.  For
a valuation $v$ and an expression $C \in \Li(X,\Po(P,\Q))$, $v(C) \in
\Po(P,\R)$ is obtained by evaluating $C$ w.r.t. $v$.  Given an update
$u$ and a valuation $v$, the valuation $v[u]$ is defined by $v[u](x) =
v(C_x)$ for $x$ in $X$ if $x := C_x$ is the update for $x$ in $u$ and
$v[u](x) =v(x)$ otherwise.  For instance, let $X = \{x_1, x_2, x_3\}$
be a set of three clocks. For valuation $v = (2p_2, 1.5, 3p_1^2)$ and
update $u$ defined by $x_1 := 1 \wedge x_3 := p_1x_3 - x_1$, applying
$u$ to $v$ yields the valuation $v[u] = (1, 1.5, 3p_1^3-2p_2)$.

A \emph{parameter valuation} is a mapping $\pi : P \mapsto \R$.  
For a parameter valuation $\pi$ and an expression $C \in \Li(X,\Po(P,\Q))$,
$\pi(C) \in \Li(X,\R)$ is obtained by evaluating $C$ w.r.t. $\pi$.  If
$C \in \Po(P,\Q)$, then $\pi(C)\in \R$.  Given a parameter valuation
$\pi$, a clock valuation $v$ and an expression $C \in
\Li(X,\Po(P,\Q))$ we write $\pi,v \models C \rel 0$ when $\pi(v(C))
\rel 0$. 

\smallskip
Given a parameter valuation $\pi$ and a PITA $\A$, substituting the
parameters by their value according to $\pi$ yields an ITA, denoted by
$\A(\pi)$, where the coefficients of clocks are in $\R$.  So the
semantics of $\A$ w.r.t. parameter valuation $\pi$ is defined by the
(timed) transition system $\T_{\A(\pi)}$. A state $q$ is reachable
from $q_0$ for valuation $\pi$ if $q$ is reachable from $q_0$ in
$\A(\pi)$.


\begin{example}
  A PITA $\A_2$ is depicted in \figurename~\ref{fig:exita1}, with two
  interrupt levels. Every level $i$ has only a main clock $x_i$.
  Fixing the parameter valuation $\pi$: $p_1=5$ and
  $p_2 = -1$, the run $(q_1,0,0)\xrightarrow{4}
  (q_1,4,0)\xrightarrow{a}
  (q_2,4,0)\xrightarrow{3}(q_1,4,2)\xrightarrow{b} (q_2,4,3)$ is
  obtained as follows. After staying in $q_1$ for $4$ time units, $a$
  can be fired and the value of $x_1$ is then frozen in state $q_2$,
  while $x_2$ increases. Transition $b$ can be taken if
  $x_1+p_2x_2=2$, hence for $x_2 = 2$, after which $x_2$ is updated to
  $x_2=(p_1-4p_2^2)4+p_2 = 3$. A geometric view of this run
  w.r.t. $\pi$ is given (in bold) in \figurename~\ref{fig:traj}.
\end{example}


\begin{figure}[ht]
\centering
\small
\subfigure[A PITA $\A_2$ with two interrupt levels]{\label{fig:exita1}
\begin{tikzpicture}[node distance=2.5cm,auto]

\node[state,initial,initial text={},scale=0.7] (q0) at (0,0) {$q_1,1$};
\node[state, accepting, accepting where=above,scale=0.7] (q1) [node distance=2.75cm,above right of=q0] {$q_2,2$};
\path[->] (q0) edge node {\timedtrans{}{}{x_1 < p_1,a}} (q1);
\path[->] (q1) edge [loop right] node [xshift=-4mm] (tr) {\timedtrans{x_1 + p_2 x_2 = 2}{b}{x_2:=(p_1-4p_2^{2})x_1+p_2}} (q2);

\end{tikzpicture}

}\hfill{~}
\subfigure[A possible run in $\A_2$ for $\pi$]{\label{fig:traj}
\begin{tikzpicture}[scale=1.5]
\path[draw=black,->] (-0.3,0) -- (2.1,0) node[anchor=west] {$x_1$};
\path[draw=black,->] (0,-0.3) -- (0,1.75) node[anchor=south east] {$x_2$};
\path[draw=black,dashed] (0,0.4) -- (1.8,0.4) {};
\path[draw=black,dashed] (0,0.6) -- (1.8,0.6) {};

\node[anchor=north] (al) at (0.8,0) {$4$};
\node[anchor=east] at (0,0.4) {$2$};
\node[anchor=east] at (0,0.6) {$3$};
\node[anchor=north] at (1,0) {$5$};

\node[anchor=north] at (1,2) {$x_1=p_1$};
\node[anchor=north] at (2.2,1.2) {$x_1+p_2x_2=2$};

\draw (1,1.75) -- (1,0) node [above] (TextNode) {};

\draw (0,-0.4)--(1.8,1.4) node [midway, below] (TextNode) {};

\node[anchor=north] (d) at (0,0) {};
\node[anchor=north] (a) at (0.8,0) {};
\node[anchor=south west] (b) at (0.8,0.4) {};
\node[anchor=south west] (c) at (0.8,0.6) {};
\path[draw=black,very thick] (d.north) -- (a.north);
\path[draw=black,very thick] (a.north) -- (b.south west);
\path[draw=black,very thick] (b.south west) -- (c.south west);
\fill(d.north) circle (0.05);
\fill(a.north) circle (0.05);
\fill (b.south west) circle (0.05);
\fill (c.south west) circle (0.05);

\end{tikzpicture}
}

\caption{An example of PITA and a possible execution}
\label{fig:exita1traj}
\end{figure}
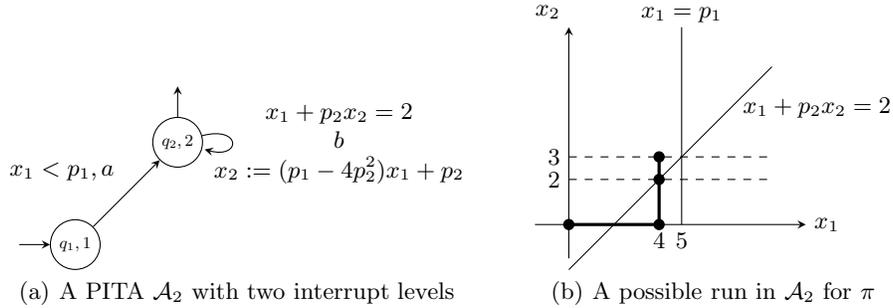

\paragraph{Reachability problems.} We consider several reachability
problems for this class. Let $\A$ be a PITA with initial state $q_0$
and $q$ be a state of $\A$.  The \emph{Existential (resp. Universal)
  Reachability Problem} asks whether $q$ is reachable from $q_0$ for
some (resp. all) parameter valuation(s).  \emph{Scoped} variants of
these problems are obtained by adding as input a set of parameter
valuations given by a first order formula over the reals or a
polyhedral constraint.  The \emph{Robust Reachability Problem} asks
whether there exists a parameter valuation $\pi$ and a real $\eps > 0$
such that for all $\pi'$ with $\|\pi - \pi'\|_{\infty}< \eps$, $q$ is
reachable from $q_0$ for $\pi'$ (where $\| \pi \|_{\infty} = max_{p\in
  P} |\pi(p)|$). When satisfied, this property ensures that small
parameter perturbations do not modify the reachability result. It is
also related to parameter synthesis where a valuation has to be
enlarged to an open region with the same reachability goal.


\section{Reachability Analysis with Additive Parametrization}
\label{sec:amodele}

We start with the easier particular case of additive parametrization,
\textit{i.e.}, expressions occurring in guards and updates are linear
expressions on clocks and parameters with rational coefficients.  We
first prove that the existential parametrized reachability problem can
be reduced to the reachability problem on (non-param\-etrized) ITA.

\begin{proposition}
  For any additively parametrized PITA $\mathcal{A}$, with set of
  states $Q$ and initial state $q_0$, there exists a (non-parametrised)
  ITA $\mathcal{A}'$, with set of states $Q'$, containing $Q$, and initial
  state $q_0'$ fulfilling the following equivalence.
  For every $q\in Q$:\\
  \centerline{there exists $\pi$ such that $q$ is reachable from $q_0$
    in $\mathcal A$ for $\pi$} \centerline{iff $q$ is reachable from
    $q_0'$ in $\A'$.}

\label{p1}
\end{proposition}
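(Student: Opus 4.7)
The plan is to encode each parameter $p \in P$ by the difference of two freshly introduced clocks $x_p^+$ and $x_p^-$ whose values are chosen non-deterministically at the very start of the run. Because the parametrisation is additive, every parameter contributes only linearly and additively to guards and updates, so the substitution $p \mapsto x_p^+ - x_p^-$ preserves the ITA syntactic shape as long as each of these new clocks is a main clock at a level strictly below every original level of $\A$.

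Concretely, I would build $\A'$ by shifting all levels of $\A$ upwards by $2|P|$ and prepending $2|P|$ new bottom levels, one per sign and per parameter, each carrying the corresponding main clock $x_{p_j}^{\pm}$. To each new level I attach a single state whose active clock is its main clock, and I chain these states by $\eps$-transitions with trivial guard and no update, terminating with an $\eps$-transition from the topmost new state into the (shifted) original initial state $q_0$. The new initial state $q_0'$ is the state at the lowest level. Every original transition $q \tr{\fee,a,u} q'$ is then copied into $\A'$ with $\fee$ and $u$ rewritten by substituting $x_{p_j}^+ - x_{p_j}^-$ for $p_j$ throughout. Since the $x_{p_j}^{\pm}$ are main clocks of levels strictly lower than $\lambda(q)$ and $\lambda(q')$, the rewritten guards remain in $\C(X_{\lambda(q)},X_{<\lambda(q)})$ and the rewritten updates still fit the allowed form, namely an affine combination of main clocks of strictly lower levels plus a rational constant.

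For the equivalence, the forward direction uses the initialization chain to realise any desired parameter valuation $\pi$: at level $2j-1$ I let time elapse for $\max(\pi(p_j),0)$ and at level $2j$ for $\max(-\pi(p_j),0)$, so that $x_{p_j}^+ - x_{p_j}^- = \pi(p_j)$. When the chain is exited, the original clocks are all still $0$ (never activated) and each parameter clock subsequently retains its value, being suspended at a strictly lower level throughout the rest of the run; thus the suffix behaviour of $\A'$ mirrors that of $\A(\pi)$ step for step. Conversely, any accepting run of $\A'$ reaching $q$ must first traverse the initialization chain, fixing values $\alpha_j^{\pm}$; setting $\pi(p_j) := \alpha_j^+ - \alpha_j^-$ then turns the remainder of the run into a valid run of $\A(\pi)$ reaching $q$.

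The main point I expect to require careful checking is that the substitution preserves every syntactic constraint of an ITA, most delicately for guards of the form $z - z'$ with $z, z'$ at the current level (which carry no parameters and are therefore unchanged) and for transitions that decrease the level (whose updates reset the vacated upper clocks to $0$, an operation that commutes with the substitution). Everything else amounts to bookkeeping on the level indexing and the clock partition.
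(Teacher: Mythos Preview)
Your construction is correct and yields the same conclusion, but it differs from the paper's in how negative parameter values are handled. You introduce two main clocks $x_{p_j}^+,x_{p_j}^-$ per parameter, each on its own new level below the original automaton, and encode $p_j$ as their difference; this costs $2|P|$ additional levels and clocks but requires no clever update, only the substitution $p_j\mapsto x_{p_j}^+-x_{p_j}^-$. The paper instead adds only $|P|+1$ new levels with one main clock each: it first lets clocks $p_0,\dots,p_{k-1}$ accumulate arbitrary nonnegative values at successive levels, and then, at the topmost new level $k+1$, chains a sequence of states whose transitions perform the updates $p_k:=\pm p_{k-1},\ p_{k-1}:=\pm p_{k-2},\ \dots,\ p_1:=\pm p_0$, so that each $p_i$ ends up holding a value of either sign. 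Your difference encoding is arguably the more transparent of the two; the paper's shift-down trick is slightly more economical in the number of levels and clocks. Since both blow-ups are linear in $|P|$, the downstream complexity bounds (2-EXPTIME in general, PSPACE when levels and parameters are fixed, PTIME when clocks and parameters are fixed) are unaffected by the choice.
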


\begin{proof}
  For any additively parametrized PITA $\mathcal{A}$ with $n$ levels,
  and $k$ parameters $p_1,...,p_k$, we build an equivalent ITA
  $\mathcal{A}'$ with $n+k+1$ levels and then use the complexity
  results of section~\ref{sec:reach-ita}. The construction is shown in
  \figurename~\ref{easyPITA}.

  \noindent
  The ITA $\mathcal{A}'$ consists of a ``prefix'' (the first $k+1$
  levels) connected to the original automaton $\mathcal{A}$ (with its
  $n$ levels). The main clocks of levels $1$ to $k$ encode the
  parameters $p_1,\ldots,p_k$ of $\A$.  In order to simplify further
  references, we also call these clocks $p_1,...,p_k$.  Similarly, the
  main clock of the first level is called $p_0$. None of these $k+1$
  first level has any auxiliary clock.  Since level numbers start at
  $1$, each clock $p_i$ is active in level $i+1$ in (the prefix of)
  $\mathcal{A}'$.

  \noindent
  In the first level of $\mathcal{A}'$, clock $p_0$ is active. After some
  arbitrary time, a transition, with no guard, is taken to the state of the
  second level and clock $p_0$ is frozen. In the second level, clock $p_1$ is
  active and the same procedure continues: after some time a transition to the
  next level is taken, and clock $p_1$ is frozen, and so on for the first $k$
  levels. 
  In these first $k$ levels, we any run of $\mathcal A'$
  choses a non-negative fixed value for the
  clocks $p_0,\ldots,p_{k-1}$, and hence almost for the parameters of
  $\mathcal{A}$. Parameters may however have negative values so level $k+1$
  serves as a technicality to choose the final sign of the corresponding
  clocks.  This is done by assigning $p_{i-1}$ or $-p_{i-1}$ to clock $p_i$,
  between each two consecutive states, for all $i\in[1..k-1]$, in a run without
  any delay in any of the states of level $k+1$ (the other runs, with delays in
  the states of level $k+1$, overlap on those corresponding to other parameter
  valuations and are therefore not a problem). In the last state of level
  $k+1$, the frozen clocks $p_1,...,p_k$ can therefore have any arbitrary real
  value assigned. 
  The automaton finally proceeds to the initial state of $\mathcal{A}$ keeping
  the values of these additional clocks. Since they correspond to levels lower
  than any level of $\mathcal{A}$ they can be used liberally enough in the
  guards and updates of $\mathcal{A}$.  The obtained automaton $\mathcal{A}'$
  is an ITA and parameters of $\mathcal{A}$ are modeled as clocks in
  $\mathcal{A}'$.

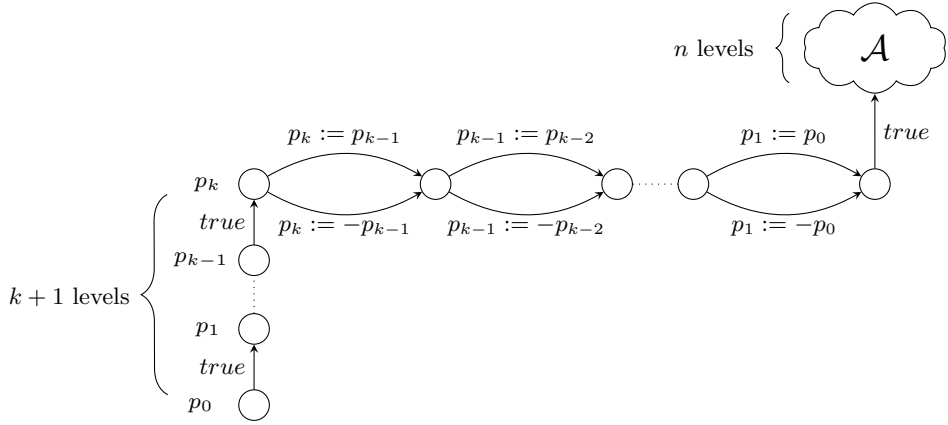
\begin{figure}
\begin{tikzpicture}[auto,scale=0.2]

\draw node[state,scale=0.5] (q0) {};
\draw node[state, above =0.6 of q0, scale=0.5] (q1) {};
\draw node[state, above=0.5 of  q1,scale=0.5] (q3) {};
\draw node[state, above=0.6 of q3,scale=0.5] (q4) {};

\draw node[state, right=2 of q4,scale=0.5] (q5) {};
\draw node[state, right=2 of q5,scale=0.5] (q6) {};
\draw node[state, right=0.6 of q6,scale=0.5] (q7) {};
\draw node[state, right=2 of q7,scale=0.5] (q8) {};

\node (q11)[left=4mm,scale=1.2, font=\scriptsize] at (q0) {$p_0$};
\node (q111)[left=3mm,scale=1.2] at (q11) {};

\node [left=3mm,scale=1.2,font=\scriptsize] at (q1) {$p_1$};
\node [left=2mm,scale=1.2, font=\scriptsize] at (q3) {$p_{k-1}$};
\node (q10)[left=3mm,scale=1.2,font=\scriptsize] at (q4) {$p_{k}$};
\node (q100)[left=4mm,scale=1.2] at (q10) {};
\node (q9)[cloud,cloud puffs=10,cloud puff arc=120, aspect=2,draw=black,above=12mm,scale=1.5] at (q8) {$\mathcal{A}$};

\node (q1111)[yshift=6mm,left=10mm,scale=1.2] at (q9) {};
\node (q0000)[yshift=-6mm,left=10mm,scale=1.2] at (q9) {};
\draw [decorate,decoration={brace,amplitude=5pt}] (q0000)--(q1111) node [xshift=-10mm,yshift=-6mm] {\small $n$ levels};

\draw[->] (q0)to node {$true$}(q1);
\draw[->] (q3) to node {$true$}(q4);
\draw[dotted] (q1) to node {} (q3);
\draw[->](q8) to node [ swap] {$true$} (q9);

\draw [decorate,decoration={brace,amplitude=10pt}] (q111)--(q100) node [xshift=-13mm,yshift=-15mm] {\small $k+1$ levels};

\draw[->] (q4) to [bend left] node {$p_k:=p_{k-1}$} (q5);
\draw[->] (q4) to [bend right] node[yshift=-4mm] {$p_k:=-p_{k-1}$} (q5);

\draw[->] (q5) to [bend left] node {$p_{k-1}:=p_{k-2}$}(q6);
\draw[->] (q5) to [bend right] node [yshift=-4mm]{$p_{k-1}:=-p_{k-2}$} (q6);

\draw[->] (q7) to [bend left] node {$p_1:=p_0$}(q8);
\draw[->] (q7) to [bend right] node [yshift=-4mm]{$p_1:=-p_0$} (q8);

\draw[dotted] (q6)--(q7);

\end{tikzpicture}
\caption{An equivalent ITA $\mathcal{A}'$}
\label{easyPITA}
\end{figure}

  \noindent
  Let $X$ be the set of clocks in $\mathcal{A}$ and $X'$ be the set of
  clocks in $\mathcal{A}'$ (thus $X'=X\cup \{p_0,...,p_k\}$). For any
  subset $Y\subseteq X$ and a valuation $v$, we define the restriction
  of $v$ to $Y$ as the unique valuation $v$ on $Y$ such that
  $v_{|Y}(x)=v(x)$. We now show that a configuration $s=(q,v)$ is
  reachable in $\mathcal{A}$ for some parameter valuation $\pi$ (i.e.,
  in $\mathcal{A}(\pi)$) iff there exists some configuration
  $s'=(q',v')$, such that $q'=q$ and for all $x\in X,
  v'_{|X}(x)=v(x)$, is reachable in $\mathcal{A}'$.

  \noindent
On the one hand, if there exists a path to reach $s'$ in $\mathcal{A}'$, then
by construction this path goes through a configuration $(q_0,v_0)$ such that
$(q_0,v_{0|X})$ is the initial configuration of $\mathcal{A}$ (i.e.
$v_{0|X}$ is the zero valuation). Let $\pi$ be the parameter valuation
such that for all $i>0, \pi(p_i)=v_{0}(p_i)$, then $s$ is reachable in
$\mathcal{A}({\pi})$. 

  \noindent
  On the other hand, let $\pi$ be a parameter valuation and $v$ be a
  clock valuation on $X$ such that $(q,v)$ is reachable in
  $\mathcal{A}(\pi)$.  Then using an appropriate run in the prefix one
  reaches $(q_0,v_0)$ with $v_{0|X}$ is the zero valuation and for all
  $i>0, v_{0}(p_i)=\pi(p_i)$. Afterwards this run is extended to reach
  $q$ by mimicking the run of $\mathcal{A}(\pi)$.

\end{proof}

Using Proposition~\ref{p1} and Theorem~\ref{prop:reachita}, we can now give
the main result of this section.

\begin{theorem}\label{thm:reach-add}
The (polyhedral scoped) existential reachability problem is
  decidable for additively pa\-rametrised PITA, and belongs to 2-EXPTIME.
  It belongs to PTIME when the number of clocks and parameters is
  fixed. It is PSPACE-complete when the number of levels and parameters is
  fixed.
\end{theorem}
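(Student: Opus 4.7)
The plan is to combine Proposition~\ref{p1} with Theorem~\ref{prop:reachita}, carefully tracking the effect of the reduction on the parameters of the complexity bounds. First, given an additively parametrised PITA $\A$ with $n$ levels, $k$ parameters, and set of clocks $X$, the proposition produces an ITA $\A'$ with $n+k+1$ levels and $|X|+k+1$ clocks (the $k+1$ added clocks being main clocks at the bottom levels, introducing no new auxiliary clocks), such that state reachability in $\A'$ is equivalent to existential parametric reachability in $\A$. The size of $\A'$ is polynomial in the size of $\A$, so plugging $\A'$ into Theorem~\ref{prop:reachita} immediately yields the 2-EXPTIME bound in the general case. When both the number of clocks in $\A$ and the number of parameters are fixed, $\A'$ has a fixed number of clocks, so the PTIME branch of Theorem~\ref{prop:reachita} applies. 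When both the number of levels in $\A$ and the number of parameters are fixed, $\A'$ has a fixed number of levels, so the PSPACE branch applies. The matching PSPACE lower bound is inherited from Proposition~\ref{prop:pspacehard}: an ITA with a single level is a PITA with one level and zero parameters, hence already PSPACE-hard.

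For the polyhedral scoped variant, where the parameter valuation must lie in a polyhedron $\Pi \subseteq \R^k$ given as input, the plan is to enforce the constraint inside $\A'$ itself. After the prefix of $\A'$ has committed to values and signs for the clocks $p_1,\ldots,p_k$, all these values are available as main clocks of levels strictly below the level of the transition entering $\A$. Since guards in an ITA at that level may compare arbitrary linear expressions of main clocks of strictly lower levels against zero, any convex constraint of $\Pi$ is expressible as such a guard; a polyhedron written as a finite union of convex polyhedra is then encoded by a non-deterministic choice among several such guarded transitions leading to the initial state of $\A$. This modification is polynomial in the size of the polyhedral description, preserves the numbers of clocks, parameters, and levels up to an additive constant, and therefore the same complexity bounds transfer through Theorem~\ref{prop:reachita}.

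The only point that needs care is that the guards added to encode the polyhedral scope must respect the syntactic restriction of ITA guards, namely that each atomic proposition involves at most one clock of the current level and otherwise only main clocks of strictly lower levels. This is exactly the situation after the prefix, since $p_0,\ldots,p_{k-1}$ are main clocks of levels $1,\ldots,k$ and the transition entering $\A$ sits above them. Because none of these clocks are ever reset later, the values witnessing a scoped parameter valuation are preserved throughout the simulation of $\A(\pi)$, closing the correspondence given by Proposition~\ref{p1} under the polyhedral restriction. The main observation driving everything is therefore that the parameter dimension contributes only additively to both the level count and the clock count of $\A'$, so fixing it together with either of these two parameters preserves the corresponding complexity class of Theorem~\ref{prop:reachita}.
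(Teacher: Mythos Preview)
Your proposal is correct and follows essentially the same approach as the paper: reduce via Proposition~\ref{p1} to an ITA with $n+k+1$ levels and $|X|+k+1$ clocks, invoke Theorem~\ref{prop:reachita}, and handle the polyhedral scope by guarding the transition from the prefix into $\A$. You additionally make explicit the PSPACE lower bound via Proposition~\ref{prop:pspacehard}, which the paper leaves implicit; one minor imprecision is that $p_k$ is the main clock of the \emph{current} level $k{+}1$ rather than a strictly lower one, but this is harmless since the guard syntax $\C(X_{k+1},X_{<k+1})$ still allows the needed linear constraints on $p_1,\ldots,p_k$.
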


\begin{proof}
  Following Proposition \ref{p1}, every additively parametrised PITA
  can be transformed into an equivalent ITA, and the (unscoped)
  reachability problem of additively parametrised PITA is thus reduced
  to the reachability problem of ITA, already known to be
  decidable. The complexity results follow from the complexity results
  for ITA given in Theorem~\ref{prop:reachita}, since the size of
  $\mathcal{A}'$ is only linear in the size of $\mathcal{A}$: if there
  are $n$ levels, $N$ clocks, $k$ parameters, $x$ states and $y$
  transitions in $\mathcal{A}$, the number of levels, clocks, states
  and transitions in $\mathcal{A}'$ are $n+k+1$, $N+k+1$, $x+2k+1$ and
  $y+3k+1$, respectively.

  \noindent
With a polyhedral scope, given as a finite union of polyhedra, we need to guard
the transition between the last state of the prefix and the initial state of
$\mathcal{A}$, in $\mathcal{A}'$, by the given polyhedra (each polyhedra of the
union could guard a different transition, as well).

\end{proof}

\section{Reachability Analysis with Multiplicative Parametrization}
\label{sec:multpar}
We now focus on the multiplicative case and this section is devoted to
the proof of the following result:
\begin{theorem}\label{thm:reach}
  The (scoped) existential, universal and robust reachability problems
  for PITA are decidable and belong to 2-EXPSPACE. The complexity
  reduces to PSPACE when the number of levels is fixed.
\end{theorem}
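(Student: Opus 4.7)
The plan is to lift the class automaton construction of Section~\ref{sec:reach-ita} to the parametric setting and delegate the parameter-dependent checks to the first order theory of the reals. Concretely, I would re-run the saturation procedure of Section~\ref{subsec:contructionexpressions} but allow each expression in $E_k$ to carry coefficients that are polynomials in the parameters $P$, not just rationals. The termination argument of Lemma~\ref{prop:terminate} is purely combinatorial (it counts branches in a forest indexed by updates and guards, not the actual values of coefficients), so the same doubly exponential bound on $|E_k|$ holds, and Lemma~\ref{lemma:size} adapts by viewing the coefficients as elements of $\Po(P,\Q)$ with polynomially bounded degree and bit-size per monomial.

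With $\{E_k\}_{k\leq n}$ in hand, I would define a class exactly as in the proof of Theorem~\ref{prop:reach}: a pair $R=(q,\{\preceq_k\}_{k\leq\lambda(q)})$ where each $\preceq_k$ is a total preorder on $E_k$. The syntactic successor relations (discrete and time) are defined by the very same case analysis, since the normalization step and the restrictions on updates still give expressions in the appropriate $E_k$. The semantics, however, is now parametrized: for a parameter valuation $\pi$,
\[\sem{R}_\pi=\{(q,v)\mid \forall k\leq\lambda(q),\ \forall g,h\in E_k,\ \pi(v(g))\leq \pi(v(h))\iff g\preceq_k h\}.\]
Non-emptiness of $\sem{R}_\pi$, satisfaction of firability conditions, and consistency of the preorders in $R$ are all first order formulas $\Phi_R(P)$ and $\Phi_e(P)$ over $(\R,+,\cdot,\leq,0,1)$ of size polynomial in the class representation. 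The class graph itself is then the same finite symbolic object as in the unparametrized case, but each node/edge carries such a formula.

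For each reachability question I would then encode the answer as a single sentence in the first order theory of the reals. For existential reachability of $q$, the sentence is
\[\exists P\;\bigvee_{\rho}\Bigl(\Phi_{R_0}(P)\wedge\bigwedge_{(R,e,R')\in\rho}\Phi_e(P)\wedge\Phi_{R'}(P)\Bigr),\]
where $\rho$ ranges over the (at most exponential in the graph size) simple paths from the initial class to some class with state $q$; equivalently one uses an $\exists P$ prefix in front of the natural fixed point computation, which is itself expressible as an existential sentence whose size is polynomial in the class graph. Universal reachability is obtained by complementing, and robust reachability by the prefix $\exists P\,\exists\eps>0\,\forall P'(\|P-P'\|_\infty<\eps\Rightarrow\cdots)$, which still has bounded alternation depth.

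For complexity, I would appeal to Canny/Basu--Pollack--Roy style decision procedures: a first order sentence of size $s$ with a fixed number of alternation blocks in variables of bounded degree can be decided in $\mathrm{PSPACE}$ in $s$ (and in $s^{O(1)}$ when the number of variables is fixed). The class graph is doubly exponential in the size of $\A$ by the adapted Lemma~\ref{prop:terminate} and Proposition~\ref{prop:zas+}, so $s$ is doubly exponential in general, yielding the 2-\textsf{EXPSPACE} bound. When the number of levels is fixed, equation~(\ref{eq:number}) makes $s$ polynomial, so the decision procedure runs in \textsf{PSPACE}; the matching lower bound is inherited from Proposition~\ref{prop:pspacehard} (which is the unparametrized case $P=\emptyset$).

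The main obstacle I anticipate is the correctness argument for the symbolic class graph in the parametric setting: I must show that the time-abstract bisimulation between the class graph and $\T_{\A(\pi)}$ is \emph{uniform} in $\pi$, in the sense that the same syntactic successor definition works for every $\pi$ provided the formulas $\Phi_R(\pi)$ and $\Phi_e(\pi)$ hold. This requires checking that the normalization step of Section~\ref{subsec:contructiongraph}, which branches on the sign of the leading coefficient $c_\ell$, is still sound when $c_\ell$ is a polynomial in $P$ whose sign is one of the predicates tracked in $\Phi_R$. Once this uniformity is established, the robust variant follows routinely: it suffices to replace existential non-emptiness by the $\exists\eps$ clause and to rely on the fact that $\Phi_R$ and $\Phi_e$ are semi-algebraic, hence have topologically well-behaved solution sets.
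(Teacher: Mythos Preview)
Your overall decidability strategy is sound, but two genuine gaps separate your plan from the paper's proof, and one of them breaks your complexity claim.

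\textbf{The successor relation is not well-defined without a parameter partition.} You write that ``the syntactic successor relations are defined by the very same case analysis'' as in Section~\ref{subsec:contructiongraph}. But in the parametric setting the normalization of $D=\sum_{i\leq\ell}a_ix_i+b$ branches on the \emph{sign} of the leading coefficient $a_\ell\in\Fr(P,\Q)$, and different signs yield different target classes $R'$. So there is no single symbolic class graph; the graph itself depends on $\pi$. The paper resolves this by collecting all such leading polynomials into a set $PolPar$ and partitioning $\R^P$ into finitely many regions $preg$ specified by a sign vector over $PolPar$ (plus a preorder $\preceq_1$ on the parameter-only expressions of $E_1$). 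Inside a fixed region $\Pi=(preg,\preceq_1)$ every normalization branch is determined, the expressions with vanishing denominators are discarded ($E_{k,preg}$), and one obtains a \emph{single} unparametrized class automaton $\Rl(\Pi)$ that is a time-abstract bisimulation of $\T_{\A(\pi)}$ uniformly for all $\pi\in\Pi$. Your remark that the sign of $c_\ell$ should be ``one of the predicates tracked in $\Phi_R$'' is the right intuition, but carrying it out forces you to reinvent $PolPar$ and the region partition; without it you cannot even say what the edges of your graph are.

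\textbf{The PSPACE bound for fixed levels does not follow from your encoding.} You claim that with $n$ fixed, equation~(\ref{eq:number}) makes the sentence size $s$ polynomial. It does not: the exponent in~(\ref{eq:number}) is $K\cdot 2^{n+1}$ with $K$ the total number of clocks, so for fixed $n$ the number of classes is still exponential in $K$, hence $s$ is exponential and a PSPACE decision procedure on $s$ only yields EXPSPACE. The paper's PSPACE bound crucially exploits the decoupling above: (i) guess a region $\Pi$, whose description has polynomial size when $n$ is fixed because $|PolPar|$ and $|E_k|$ are polynomial (Lemma~\ref{prop:pterminate}) and the polynomials have polynomially bounded degree and bit-size (Lemma~\ref{lemma:psize}); (ii) check non-emptiness of $\Pi$ by a polynomial-size existential real formula, in PSPACE; (iii) since $\Rl(\Pi)$ is now an ordinary (unparametrized) class automaton with polynomial-size classes, perform a nondeterministic on-the-fly search without materialising the exponential graph. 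The first-order theory of the reals is invoked only for step~(ii), on a polynomial-size input, not for the reachability itself. Your single-sentence encoding conflates these steps and loses the bound.
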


We first present the main ideas underlying the proof, which is based
on the proof of Theorem~\ref{prop:reachita} but extends it by the
handling of parameters.  Given a PITA $\A$, the first step is to build
a finite partition of the set $\R^P$ of parameter valuations. An
element $\Pi$ of this partition is specified by a satisfiable
first-order formula over $(\R,+,\times)$, with the parameters as
variables. Intuitively, inside $\Pi$ the qualitative behaviour of $\A$
does not depend on the precise parameter valuation.  In a second step,
we build a finite automaton $\Rl(\Pi)$ for each non empty $\Pi$.  In
$\Rl(\Pi)$, a state $R$, again called a class, defines a set
$\sem{R}_{\pi}$ of reachable configurations of $\T_{\A(\pi)}$ for a
valuation $\pi \in \Pi$. The transition relation of $\Rl(\Pi)$
contains discrete steps $R \tr{e} R'$ (for a transition $e$ of $\A$)
and abstract time steps $R \tr{} Post(R)$ with the following
properties:
\begin{description}
\item[Discrete Step (DS):] If there is a transition $R \tr{e} R'$ in
  $\Rl(\Pi)$ then for each $\pi \in \Pi$ and each $(q,v) \in
  \sem{R}_{\pi}$ there exists $(q',v') \in \;\sem{R'}_{\pi}$ such that
  $(q,v) \tr{e} (q',v')$.

  Conversely, let $\pi \in \Pi$ and $(q,v) \in \sem{R}_{\pi}$. If there
  exists a transition $(q,v) \tr{e} (q',v')$ in $\T_{\A(\pi)}$ then for
  some $R'$, there is a transition $R \tr{e} R'$ in
  $\Rl(\Pi)$ and $(q',v')$ belongs to $\sem{R'}_{\pi}$.
\item[Time Step (TS):] Let $\pi \in \Pi$ and $(q,v) \in
  \sem{R}_{\pi}$. There exists $d>0$ such that $(q,v+_qd) \in
  \sem{Post(R)}_{\pi}$ and for each $d'$ with $0 \leq d' \leq d$,
  $(q,v+_q d') \in \sem{R}_{\pi} \cup \sem{Post(R)}_{\pi}$.
\end{description}
Hence, we obtain a finite family of abstract time bisimulations of the
transition systems $\T_{\A(\pi)}$, for all parameter valuations, which
gives the decidability result.

\smallskip Although the construction of $\Rl(\Pi)$ is similar to the
one for ITA, expressions in the sets $\{E_k\}_{k\leq n}$ now contain
polynomials of parameters. The main difference is the normalization
operation of an expression $\sum_{i\leq k} a_ix_i+b$ which depends on
the polynomial $a_k$.  For instance, consider expression
$p_2x_2+x_1-2$ which appear in automaton $\A_2$ of
\figurename~\ref{fig:exita1} with a comparison to $0$. For a valuation
where $p_2=0$, a normalization should yield $x_1-2$.  If $p_2 \neq 0$,
the operation should yield $-\frac{x_1-2}{p_2}$.  In addition, the
case $p_2\neq 0$ should be split depending on the sign of $p_2$, since
the operation could change the comparison operator involved in a
guard. Therefore, we also need to define a set $PolPar$ of polynomials
appearing in the denominators like $p_2$.

\subsection{Construction of $PolPar$  and expressions $\{E_k\}_{k\leq n}$}
\label{subsec:pcontructionexpressions}
In the spirit of normalization, we define three operations on
expressions, relatively to a level $k$, to help building the elements
in $E_k$ to which the active clock on level $k$ will be compared.

\begin{definition}
  Let $k\leq n$ be some level and let $C$ be an
  expression in $\Li(X_{<n+1},$ $\Fr(P,\Q))$, 
   $C=\sum_{i\leq n}a_ix_i+b$ with $a_k = \frac{r_k}{s_k}$, for
  some $r_k$ and $s_k$ in $\Po(P, \Q)$. We associate with $C$ the
  following expressions:
\begin{itemize}
	\item ${\tt lead}(C,k)=r_k$;
	\item if ${\tt lead}(C,k) \notin \Q\setminus\{0\}$, ${\tt
            comp}(C,k)=\sum_{i< k}a_ix_i+b$;
	\item if ${\tt lead}(C,k)\neq 0$ then 
	${\tt compnorm}(C,k)=-\sum_{i<k}\frac{a_i}{a_k}x_i-\frac{b}{a_k}$.
\end{itemize}
\end{definition}

In the previous example, {\tt comp} corresponds to $x_1-2$ while {\tt
  compnorm} corresponds to $-\frac{x_1-2}{p_2}$. More examples are
given after the construction of $PolPar$ and $\{E_k\}_{k\leq n}$. This
construction proceeds top down from level $n$ to level $1$ after
initialising $PolPar$ to $\emptyset$ and $E_k$ to $X_k\cup\{0\}$ for all
$k$. When handling level $k$, we add new terms to $E_i$ for $1\leq
i\leq k$.

\begin{enumerate}
\item At level $k$ the first step consists in adding new expressions
  to $E_k$ and new polynomials to $PolPar$. More precisely, let $C$ be
  any expression occurring in a guard of an edge leaving a state of
  level $k$.  We add  
  ${\tt lead}(C,k)$ to $PolPar$ when it does not belong to $\Q$ and
  we add ${\tt comp}(C,k)$ and ${\tt compnorm}(C,k)$ to $E_k$
   when they are defined.
\item The second step consists in iterating the following procedure
  until no new term is added to any $E_i$ for $1\leq i\leq k$.
	\begin{enumerate}
        \item Let $q \tr{\fee,a,u} q'$ with $\lambda(q)\geq k$ and
          $\lambda(q')\geq k$, and let $C \in E_{k}$. Then we add $C[u]$
          to $E_{k}$.
        \item Let $q \tr{\fee,a,u} q'$ with $\lambda(q) < k$ and
          $\lambda(q') \geq k$.  Let $\{C,C'\}$ be a set of two
          expressions in $E_{k}$. We compute $C''=C[u]-C'[u]$,
          choosing an arbitrary order between $C$ and $C'$. 
          This step ends by handling $C''$
          w.r.t. $\lambda(q)$ as done for $C$ w.r.t. $k$ in step 1
          above.

%
	\end{enumerate}
\end{enumerate}

\begin{example}
  For the automaton of \figurename~\ref{fig:exita1}, initially, we
  have $PolPar=\emptyset$, $E_1=\{x_1, 0\}$ and $E_2=\{x_2,
  0\}$. Starting with level $k=2$, we consider in step 1 the
  expression $C_2=p_2x_2+x_1-2$ appearing in the guard of the single
  edge leaving $q_2$. We compute ${\tt lead}(C_2,2)=p_2$, ${\tt
    comp}(C_2,2)=x_1-2$, and ${\tt
    compnorm}(C_2,2)=-\frac{x_1-2}{p_2}$. We obtain $PolPar=\{p_2\}$
  and $E_2=\{x_2, 0, x_1-2, -\frac{x_1-2}{p_2}\}$.
  For step 2(a) and the same edge, we apply its update
  to the expressions of $E_2$ that contain $x_2$, add them to $E_2$,
  and thus obtain $E_2=\{x_2, 0, x_1-2,
  -\frac{x_1-2}{p_2},(p_1-4p_2^2)x_1+p_2\}$.

  In step 2(b), considering the single edge from $q_1$ to $q_2$,
  we compute the differences between any two expressions from $E_2$
  (after applying update which means here substituting 0 to $x_2$
  and letting $x_1$ unchanged) and the resulting expressions ${\tt lead}$,
  ${\tt comp}$ and ${\tt compnorm}$, which yields:\\
$PolPar=\{p_2, p_2+1,1-p_1+4p_2^2,1+p_1p_2-4p_2^3\}$,\\ $E_1=\{x_1, 0, 2,
  -\frac{2(p_2+1)}{p_2},-2-p_2, \frac{2+p_2}{1-p_1+4p_2^2},
  \frac{p_2^2-2}{p_2}, \frac{2-p_2^2}{1+p_1p_2-4p_2^3}
  \}$.

  We proceed with level $1$.
Since expression $C_1=x_1-p_1$ occurring in the guard of the considered edge
has leading coefficient equal to 1,
there is no term to add to $PolPar$.
We add ${\tt
    compnorm}(C_1,1)=p_1$ to $E_1$, hence the final result is:\\
 $PolPar=\{p_2, p_2+1,1-p_1+4p_2^2,1+p_1p_2-4p_2^3\}$,\\ 
 $E_1=\{x_1, 0, 2,
  -\frac{2(p_2+1)}{p_2},-2-p_2, \frac{2+p_2}{1-p_1+4p_2^2},
  \frac{p_2^2-2}{p_2}, \frac{2-p_2^2}{1+p_1p_2-4p_2^3},p_1
  \}$,\\
   $E_2=\{x_2, 0, x_1-2,
  -\frac{x_1-2}{p_2},(p_1-4p_2^2)x_1+p_2\}$.

\end{example}

Lemma~\ref{prop:polpar} below is used for the class automata
construction.  Its proof is obtained by a straightforward examination
of the above procedure. 
%
%

\begin{lemma}
\label{prop:polpar}
Let $C$ belong to $E_k$ for some $k$ and $c=\frac{r}{s}$ be a
coefficient of $C$ with $s \notin \Q$. Then there exists polynomials
$P_1,\ldots, P_\ell \in PolPar$ and some constant $K \in
\Q\setminus\{0\}$ such that $s = K. \prod_{1 \leq i \leq \ell} P_i$.
\end{lemma}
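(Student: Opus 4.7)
The plan is to prove the lemma by induction on the stages of the construction procedure of $PolPar$ and the $\{E_k\}_{k\leq n}$ given in Section~\ref{subsec:pcontructionexpressions}. The invariant I would maintain coincides with the statement: at every stage, for every expression currently in some $E_k$ and every coefficient $c = r/s$ of it with $s \notin \Q$, one has $s = K \cdot \prod_{i} P_i$ for some $K \in \Q\setminus\{0\}$ and $P_i \in PolPar$. The base case is immediate: after initialisation, $E_k = X_k \cup \{0\}$ and every coefficient is $0$ or $1$, so the invariant holds vacuously.

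For the inductive step I would examine each operation separately. In Step~1 at level $k$, the guard expression $C$ has polynomial coefficients (by definition of PITA), hence only ${\tt compnorm}(C,k)$ can introduce a non-rational denominator; this denominator is precisely $a_k = {\tt lead}(C,k)$, which is simultaneously added to $PolPar$, so the invariant holds with a single factor. In Step~2(a), applying an update $u$ (whose coefficients lie in $\Po(P,\Q)$) to $C \in E_k$ substitutes each clock by a polynomial expression in the other clocks; a typical resulting coefficient has the shape $\sum_j (r_j/s_j)\, c_{ji}$ with $c_{ji} \in \Po(P,\Q)$, which placed over the common denominator $\prod_j s_j$ yields a fraction whose denominator is, by the inductive hypothesis on each $s_j$, a product of elements of $PolPar$ times a rational constant. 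In Step~2(b), the difference $C'' = C[u] - C'[u]$ is handled as a sum of fractions and its denominators remain products of $PolPar$ elements; then the repeat of the Step~1 treatment on $C''$ at level $\lambda(q)$ adds ${\tt lead}(C'',\lambda(q))$ to $PolPar$, and the new ${\tt compnorm}$ coefficients have the form $(r_i/s_i)/(r/s) = r_i s/(s_i r)$, whose denominator $s_i r$ is still a product of $PolPar$ elements since $s_i$ and $s$ already were and $r$ has just been added.

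The main subtlety lies in Step~2(b): dividing by a coefficient that is itself a rational function (rather than a polynomial) could a priori introduce spurious denominators, but multiplying numerator and denominator by the appropriate $s$ absorbs the extra factor into the numerator and leaves the denominator as a product of $PolPar$ elements. Since the lemma asserts mere existence of a product decomposition (not a reduced or canonical representation), there is no need to worry about possible cancellations: keeping non-reduced fractions and using the product-of-denominators convention whenever combining expressions makes the induction go through cleanly.
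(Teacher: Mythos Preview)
Your proposal is correct and is precisely the ``straightforward examination of the above procedure'' that the paper invokes in lieu of a proof: an induction over the construction of $PolPar$ and the $E_k$'s, checking that each operation (Step~1, Step~2(a), Step~2(b)) preserves the invariant that every non-rational denominator is, up to a rational factor, a product of polynomials already in $PolPar$. Your handling of the only delicate point---the division by a rational-function coefficient in ${\tt compnorm}$ during Step~2(b), where the numerator $r$ of the leading coefficient has just been inserted into $PolPar$---is exactly what is needed, and your remark that non-reduced representations suffice for an existence statement is the right way to sidestep cancellation issues.
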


Lemma~\ref{prop:pterminate} is the parametrized version of
Lemma~\ref{prop:terminate} and its (omitted) proof is almost
identical.

\begin{lemma}
\label{prop:pterminate}
For a PITA $\A$, let $H$ be the number of constraints in the guards,
$U$ the number of updates in the transitions (we assume $U \geq 2$)
and $M= \textrm{max}\{ card(X_k) \mid 1 \leq k \leq n\}$. The construction
procedure of $\{E_k\}_{k\leq n}$ terminates and the size of every
$E_k$ is bounded by $(H+M)^{2^{n-k}}\times U^{2^{n(n-k+1)}}$.
\end{lemma}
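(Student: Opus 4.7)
The plan is to transcribe the proof of Lemma~\ref{prop:terminate} nearly verbatim, exploiting the fact that the structural restrictions on updates in a PITA coincide with those in an ITA: an update either leaves a clock unchanged, replaces it by a linear expression over main clocks of strictly lower levels, or swaps clocks at the same level. Only the coefficients are now in $\Fr(P,\Q)$ rather than $\Q$, and the initial step at level $k$ now produces up to two expressions per guard (${\tt comp}$ and ${\tt compnorm}$) rather than a single normalised one. Neither change affects the \emph{count} of distinct elements generated in each $E_k$: the factor two is absorbed into the constants and the parameters play no role in the combinatorial argument.

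I would first fix a level $k$ and handle the update-application step~2(a) alone, since within a stage this is the only step enriching $E_k$ (step~2(b) only feeds lower levels $E_j$ with $j<k$). As in the non-parametric proof, the elements added form a forest whose edges connect $C$ to $C[u]$ and whose branching is bounded by $U$. K\"onig's lemma yields termination once one observes that trivial updates $x := x'$ add nothing and that any substitution of a main clock $x_i$ by an expression in $\{x_j \mid j<i\}$ is \emph{permanent}: $x_i$ cannot reappear in any descendant. Hence along any branch at most one non-trivial substitution occurs per level $i \leq k$, so splitting iteratively on these substitutions cuts any branch into at most $2^k$ substitution-free pieces. This bounds the branch length by $2^k$ and gives $|E_k| \leq |E_k^0|\cdot U^{2^k}$ after step~2(a). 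Step~2(b) then contributes at most $O(U \cdot |E_k|^2)$ entries to each $E_j$ with $j<k$, while any additions to $PolPar$ are tracked separately and do not enter the recurrence for the $|E_\ell|$'s.

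A backward induction from $k=n$ down to $k=1$, identical to the one in the non-parametric proof, then yields the claimed bound $(H+M)^{2^{n-k}}\cdot U^{2^{n(n-k+1)}}$ after absorbing lower-order terms using $U\geq 2$ and $n\geq 2$. The only subtlety worth flagging is that the ${\tt compnorm}$ operation introduces rational-function denominators but creates no new clock symbols, so every element of $E_k$ remains linear in clocks of $X_{<k+1}$ with coefficients in $\Fr(P,\Q)$. This shape-preservation is exactly what the forest/K\"onig argument relies on; everything else is bookkeeping lifted from Lemma~\ref{prop:terminate} without modification.
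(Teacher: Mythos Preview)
Your proposal is correct and matches the paper's approach exactly: the paper omits the proof of Lemma~\ref{prop:pterminate} entirely, stating only that it is ``almost identical'' to the proof of Lemma~\ref{prop:terminate}, which is precisely the transcription you outline. Your observation that the parametric coefficients in $\Fr(P,\Q)$ play no role in the combinatorial counting (only the clock structure of updates matters for the forest/K\"onig argument) and that the extra ${\tt comp}/{\tt compnorm}$ pair at most doubles the seed size is the right justification for why the argument carries over unchanged.
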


Lemma~\ref{lemma:psize} is the parametrized version of
Lemma~\ref{lemma:size}.  However since the coefficients are now
rational functions, the degree of the polynomials must also be
analyzed.

\begin{lemma}
\label{lemma:psize}
Let $\A$ be a PITA, and let $b_0$ be the maximal total number of bits
for integers of an expression in $\A$ and $d_0$ the maximal degree of
polynomials, occurring in $\A$. If $b$ is the total number of bits of
the integer constants and $d$ the degree of a polynomial, occurring in
an expression of $PolPar$ or some $E_k$, then $b \leq ((n+1)!)^2
(n+1)2^{3n+1}b_0$ and $d \leq (n+1)!5^nd_0$.
\end{lemma}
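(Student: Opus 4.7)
The strategy is to mirror the inductive analysis of Lemma~\ref{lemma:size}, but to track two quantities simultaneously: $b_k$, the maximal bit size of integer constants, and $d_k$, the maximal total degree in the parameters of the polynomials appearing in any coefficient, both measured just before level $n-k$ is processed. I would establish separate recurrences controlling how each quantity evolves from $(b_k, d_k)$ to $(b_{k+1}, d_{k+1})$ during one level of the construction, and then unroll them to obtain the stated closed-form bounds.

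The first step is to analyze the three atomic operations {\tt lead}, {\tt comp}, and {\tt compnorm} at some level $j$. The first two merely extract subexpressions, so they increase neither $b$ nor $d$ substantially (at worst a constant factor to clear a rational coefficient). The operation {\tt compnorm}$(C,j)$ is more costly: it divides every other coefficient of $C$ by the leading one, so after putting the result over a common denominator the polynomial degrees roughly double and the bit sizes grow by a factor proportional to $b_k$. Lemma~\ref{prop:polpar} is crucial here, because it guarantees that every denominator remains a rational multiple of a product of polynomials from $PolPar$, so the common-denominator manipulation involves only controlled factors.

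The second step is the analysis of the iteration. Step 2(a) applies an update $u$: each substitution of a clock by a linear expression with polynomial coefficients multiplies coefficient polynomials, so degrees add and bit sizes accumulate multiplicatively. A composition of at most $n-k+1$ such substitutions, tracked exactly as in the proof of Lemma~\ref{prop:pterminate}, yields a growth analogous to the $(n-k+1)^2$ factor appearing in the non-parametric proof, but multiplied additionally in degree by the per-update contribution $d_0$. Step 2(b) then forms $C[u]-C'[u]$, which requires a common denominator and hence once more roughly doubles degrees and adds bit sizes, before reapplying the three primitives of step 1 at level $\lambda(q)$.

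The main obstacle is the bookkeeping of polynomial denominators. In the setting of Lemma~\ref{lemma:size} one could assume a single shared denominator $s$ and factor it out once and for all, but here denominators are polynomials in the parameters whose factorisation may change when differences are taken. I would rely on Lemma~\ref{prop:polpar} to argue that at every stage all denominators remain products of elements of $PolPar$ up to a rational scalar, so that passing to a common denominator multiplies only a controlled number of factors of controlled degree and bit size. Once the single-level recurrences are in hand, a backward induction from the base case at level $n$, identical in shape to that of Lemma~\ref{lemma:size} but carrying an extra polynomial-degree component in each inequality, yields $d \le (n+1)! 5^n d_0$ and $b \le ((n+1)!)^2 (n+1) 2^{3n+1} b_0$ as required.
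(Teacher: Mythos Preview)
Your proposal is correct and follows essentially the same route as the paper: a level-by-level backward induction tracking $b_k$ and $d_k$, analysing in turn the normalisation-style primitives, the update substitutions of step 2(a), and the differences plus re-normalisation of step 2(b), then unrolling the resulting recurrences. Two minor discrepancies are worth flagging: the number of successive updates applied at level $n-k$ is bounded by $2^{n-k}$ (from the branch-length argument of Lemma~\ref{prop:terminate}), not by $n-k+1$ as you write, and the paper carries out the denominator bookkeeping directly on bit sizes and degrees without invoking Lemma~\ref{prop:polpar}; neither point affects the overall strategy.
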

\begin{proof}
  W.l.o.g. we assume that there is a single denominator for the rationals
  occurring in updates since it only induces a polynomial blow up.

 \noindent
 Assume that before the level $n-k$ is performed, the total number of
 bits for integers occurring in some expression is $b_{k}$.  We
 establish by induction that $b_k\leq \prod_{j=1}^k (n+2-j)^2
 (k+1)2^{n+2k+1}b_0$.  The basis case is trivial.  At level $n-k$,
 step 1 does induces an increasing only when operation {\tt compnorm}
 is applied on a original guard whose coefficients are polynomials
 (instead of rational fractions).  After this operation the number of
 bits is bounded by $(n-k+1)b_0\leq (n-k+1)b_k$.  For an expression
 that was already present in $E_{n-k}$, its coefficients are modified
 in order to get a common denominator by taking the product of the
 original denominators.  After this transformation the total number of
 bits is bounded by $(n-k+1)2b_k$.
%

  \noindent  Examining one
  update applied on an expression, the total number
  of bits of the coefficients of the updated expression
  is increased by $(n-k+1)b_0$.
%
  Since an expression built after step 2(a) has been obtained
  by less than $2^{n-k}$ updates, the total number of
  bits is less than 
  $(n-k+1)2b_k+2^{n-k}(n-k+1)b_0$.
 
\noindent At step 2(b), the difference $C[u]-C'[u]$ requires to
compute the lcm of two denominators (bounded by their product). So the
difference operation leads to a bound $(n-k+1)4b_k+2^{n-k+1}(n-k+1)b_0$
for the total number of bits.

\noindent The final step 2(b) consists in multiplying a numerator and
a denominator of some coefficients leading to a bound:\\
$(n-k+1)^2(4b_k+2^{n-k+1}b_0)$\\
$\leq 
(n-k+1)^2\left(4\prod_{j=1}^k (n+2-j)^2 
(k+1)2^{n+2k+1} b_0+2^{n-k+1}b_0\right)$\\
$\leq \left(\prod_{j=1}^{k+1} (n+2-j) \right)^2  
((k+1)2^{n+2(k+1)+1}b_0+2^{n+2(k+1)+1}b_0)$\\
$= \left(\prod_{j=1}^{k+1} (n+2-j) \right)^2  
(k+2)(2^{n+2(k+1)+1}b_0)$\\
for the number of bits.

\smallskip \noindent Assume that before the level $n-k$ is performed,
the degree of a polynomial (of parameters) occurring in some
expression is at most $d_{k}$.  We establish a relation between $d_k$
and $d_{k+1}$.  At level $n-k$, step 1 does not induce any increasing
when operation 
{\tt compnorm} is applied on a original guard
whose coefficients are polynomials (instead of rational fractions).  
More precisely the
numerators of rational fractions are unchanged while the denominators
are numerators of some previous expressions.
  For an expression that was already present in $E_{n-k}$, 
  its coefficient are modified in order to get a common denominator
  by taking the product of the original denominators.
  After this transformation the maximal degree
  is bounded by $(n-k+1)d_k$.

\noindent Let us examine an expression $C=\sum_{i\leq {n-k}}a_ix_i+b$
built after step 2(a).  Examining the successive updates, the
numerator of coefficient $a_i$ can be expressed as $\sum_{d\in
  \mathcal D} \prod_{j \in d} c_{d,j}$ where $\mathcal D$ is the set
of subsets of $\{i,\ldots,n-k\}$ containing $i$ and $c_{d,j}$ 
are all coefficients of the updates (i.e. coefficients of polynomials)
 except one coefficient of
the expression built before this step. 
The same reasoning applies to
the constant coefficient of the expression. 
 So the degree of the $a_i$'s and $b$ is bounded by:
$(n-k+1)(d_k+d_0)$. The denominators are denominators of expressions
previously built so bounded by $(n-k+1)d_k$.
 
\noindent At step 2(b), the difference $C[u]-C'[u]$ requires to
compute the lcm of two denominators (bounded by their product). So the
difference operation leads to a bound $(n-k+1)(2d_k+d_0)$ for the numerators of
its coefficients and $(n-k+1)2d_k$ for the denominators.

\noindent The final step 2(b) consists in multiplying a numerator and
a denominator of some coefficients leading to a bound $(n-k+1)(4d_k+d_0)$. So
$d_{k+1}\leq (n-k+1)5d_k$ yielding the desired bound.
\end{proof}

We now explain the partition construction. Starting from the finite
set $PolPar$, we split the set of parameter valuations in parameter
regions specified by the result of comparisons to $0$ of the values of
the polynomials in $PolPar$. For instance, for the set $PolPar$
computed above, the inequalities $p_2 < 0$, $p_2+1=0$,
$1-p_2-4p_1^2=0$ and $1+p_1p_2-4p_2^3=0$ define a set $preg$ of
parameter valuations. The parameter region $preg$ is non empty since
it contains $p_1=5$ and $p_2=-1$.  The set of such constraints
yielding non empty regions can be computed by solving an existential
formula of the first-order theory of reals.

Then, given a non empty parameter region $preg$, we
consider the following subset of $E_k$ for $1 \leq k \leq n$:
$\ E_{k,preg} = \{ C \in E_k \ \mid$ the denominators
of coefficients of $C$ are non null in  $preg \}$.
Due to Lemma~\ref{prop:polpar}, these subsets are obtained 
by examining the specification of $preg$.

Observe that expressions in $E_{1,preg} \setminus X_1$ belong to
$\Fr(P,\Q)$ and that, depending on the parameter valuation, 
the values of two expressions can be differently ordered. We refine $preg$
according to a linear pre-order $\preceq_1$ on $E_{1,preg}\setminus
X_1$ which is satisfiable within $preg$. We denote this refined
region by $\Pi=(preg,\preceq_1)$ and we now build a finite automaton
$\Rl(\Pi)$.

\subsection{Construction of the class automata}
\label{subsec:pcontructiongraph}
In this paragraph, we fix a non empty parameter region
$\Pi=(preg,\preceq_1)$.

\smallskip \noindent {\bf Class definition.}  A state of $\Rl(\Pi)$,
called a class like before, is defined as a pair $R=(q,\{\preceq_k\}_{1
  \leq k \leq \lambda(q)})$ where $q$ is a state of $\A$ and
$\preceq_k$ is a total preorder over $E_{k,preg}$, for $1 \leq k \leq
\lambda(q)$.  For a parameter valuation $\pi \in \Pi$, the class $R$
describes the following subset of configurations in $\T_{\A,\pi}$:

\centerline{$\sem{R}_{\pi}=\{(q,v) \mid \; \forall k \leq \lambda(q)\
  \forall g,h \in E_{k,preg}, \ \pi(v(g)) \leq \pi(v(h)) \mbox{~iff~}
  g \preceq_k h\}$}

The initial state of $\Rl(\Pi)$ is the class $R_0$, such that
$(q_0,{\bf 0}) \in \sem{R_0}_{\pi}$, which can be straightforwardly
determined by extending $\preceq_1$ to $E_{1,preg}$ with $x\preceq_{1}
0$ and $0\preceq_{1} x$ for all $x\in X_1$, and closing $\preceq_{1}$
by transitivity.

Transitions in $\Rl(\Pi)$ consist of the following discrete and time steps:

\paragraph{Discrete step.}
Let $R=(q,\{\preceq_i\}_{1 \leq i \leq \lambda(q)})$ and
$R'=(q',\{\preceq'_i\}_{1 \leq i \leq \lambda(q')})$ be two classes
and let $e : q \tr{\fee,a,u} q'$ be a transition in $\A$. There is a
transition $R \tr{e} R'$ if for some $\pi\in \Pi$, there are some
$(q,v) \in\, \sem{R}_{\pi}$ and $(q',v') \in\, \sem{R'}_{\pi}$ such
that $(q,v) \tr{e} (q',v')$. In this case, we claim that for all
$(q,v) \in\, \sem{R}_{\pi}$ there is a $(q',v') \in\, \sem{R'}_{\pi}$
such that $(q,v) \tr{e} (q',v')$. For this, we prove in the sequel
that the existence of transition $R \tr{e} R'$ is independent of $\pi
\in \Pi$ and of $(q,v) \in\, \sem{R}_{\pi}$. It can be seen as
follows.

\noindent We note $\ell = \lambda(q)$ for the level of transition $e$. 

\smallskip \noindent \emph{Firability condition.}  We write
$\fee=\bigwedge_{j \in J} C_j \bowtie_j 0$ with, for each $j$, either
$C_j=a_{\ell} z+\sum_{i<\ell}a_ix_i+b$ (with $z\in X_\ell$) or
$C_j=z-z'$ with $z,z' \in X_\ell$.
%
We consider three subcases of the first case.

\noindent $\bullet$ {\bf Subcase $a_{\ell}=0$.} Then 
$C_j = {\tt  comp}(C_j,\ell)\in E_{\ell,preg}$ and using the
positions of $0$ and $C_j$ w.r.t. $\preceq_{\ell}$, we can
decide whether $C_j \bowtie_j 0$.

\noindent $\bullet$ {\bf Subcase $a_{\ell}\in \Q\setminus\{0\}$.}
Then ${\tt compnorm}(C_j,{\ell})\in E_{{\ell},preg}$, hence using the
sign of $a_{\ell}$ and the positions of $z$ and ${\tt
  compnorm}(C_j,{\ell})$ w.r.t. $\preceq_{\ell}$, we can decide
whether $C_j \bowtie_j 0$.

\noindent $\bullet$ {\bf Subcase $a_{\ell}\notin \Q$.}  According
to the specification of $preg$, we know the sign of $a_{\ell}$
as it belongs to $PolPar$. In case $a_{\ell}=0$, we decide as in
the first subcase.  Otherwise, we decide as in the second subcase.

\noindent
The second case $C_j=z-z'$ is handled similarly.


\smallskip \noindent \emph{Successor definition.}
To build the successor $R'=(q',\{\preceq'_i\}_{1 \leq i \leq
  \lambda(q')})$ of $R$, we have to define the preorders
$\{\preceq'_i\}_{1 \leq i \leq \lambda(q')}$. Let $k \leq \lambda(q')$
and $g,h \in E_{k,preg}$.
\begin{enumerate}
\item Either $k \leq \ell$, by step 2(a) of the construction,
  $g[u],h[u] \in E_{k,preg}$. Then $g \preceq'_{k} h$ iff $g[u]
  \preceq_{k} h[u]$.
\item Or ${k} > \ell$, let
  $D=g[u]-h[u]=\sum_{i\leq\ell}a_ix_i+b$. There are again three subcases.

  \noindent $\bullet$ {\bf Subcase $a_{\ell}=0$.} Then $D ={\tt
    comp}(D,\ell)\in E_{\ell,preg}$, so we can decide whether $D
  \preceq_{\ell} 0$ and $g' \preceq'_{k} h'$ iff $D \preceq_{\ell} 0$.

  \noindent $\bullet$ {\bf Subase $a_{\ell}\in \Q\setminus\{0\}$.}
  Then ${\tt compnorm}(D,{\ell})\in E_{{\ell},preg}$.  There are four
  possibilities to consider. For instance if $a_{\ell}>0$ and
  $x_{\ell}\preceq_{\ell} {\tt compnorm}(D,{\ell})$ then $g'
  \preceq'_{k} h'$. The other cases are similar.

  \noindent $\bullet$ {\bf Subcase $a_{\ell}\notin \Q$.}  Let us write
  $a_{\ell}=\frac{r_{\ell}}{s_{\ell}}$. According to the specification
  of $preg$, we know the sign of $a_{\ell}$ since $r_{\ell}$ belongs
  to $PolPar$ and $s_{\ell}$ is a product of items in $PolPar$.  In
  case $a_{\ell}=0$, we decide $g' \preceq'_{k} h'$ as in the first
  case.  Otherwise, we decide in a similar way as in the second case.
  For instance if $a_{\ell}>0$ and $x_{\ell}\preceq_{\ell} {\tt
    compnorm}(D,{\ell})$ then $g' \preceq'_{k} h'$.

\end{enumerate}



\paragraph{Time step.}
For $R=(q,\{\preceq_k\}_{1 \leq k \leq \ell})$, there is a transition
$R \tr{succ} Post(R)$, where $Post(R)=(q,\{\preceq'_k\}_{1 \leq k \leq
  \ell})$ is the time successor of $R$, defined as
follows. Intuitively, all preorders below $\ell=\lambda(q)$ are fixed,
so $\preceq'_i=\preceq_i$ for each $i < \ell$. On level $\ell$, the
value of the active clock simply progresses along the one dimensional
time line, where the expressions are ordered. More precisely, let
$\sim$ be the equivalence relation $\preceq_{\ell} \cap
\preceq^{-1}_{\ell}$ induced by the preorder. A $\sim$-equivalence
class groups expressions yielding the same value, and on these
classes, the (total) preorder becomes a (total) order. Let $V$ be the
$\sim$-equivalence class containing $\act(q)$.
\begin{enumerate}
\item Either $V=\left\{\act(q)\right\}$. If $V$ is the greatest
  $\sim$-equivalence class, then $\preceq'_{\ell}=\preceq_{\ell}$ (and
  $Post(R)=R$). Otherwise, let $V'$ be the next $\sim$-equivalence
  class. Then $\preceq'_{\ell}$ is obtained by merging
  $V=\left\{\act(q)\right\}$ and $V'$, and preserving $\preceq_{\ell}$
  elsewhere.
\item Or $V$ is not a singleton. Then we split $V$ into $V\setminus
  \left\{\act(q)\right\}$ and $\left\{\act(q)\right\}$ and ``extend''
  $\preceq_{\ell}$ by $V \setminus \left\{\act(q)\right\}
  \preceq'_{\ell} \left\{\act(q)\right\}$.
\end{enumerate}
To conclude, observe that the automaton $\Rl(\Pi)$
defined above has the properties \textbf{(DS)} and \textbf{(TS)}
mentionned previously, and is hence a finite time abstract bisimulation of
$\T_{\A,\pi}$, for all parameter valuations $\pi \in \Pi$.

We are now in position to prove Theorem~\ref{thm:reach}.

\begin{proof}
  Starting from a PITA $\A$, we use the above construction, whose
  termination is guaranteed by lemma~\ref{prop:pterminate}, to design
  a non deterministic procedure for existential reachability of a
  given state $q$:
\begin{enumerate}
 \item Build $PolPar$ and $\{E_k\}_{1\leq k\leq n}$.
 \item Guess a parameter region $(preg,\preceq_1)$.
 \item Check non emptiness of $(preg,\preceq_1)$.
 \item Build the class automaton $\mathcal R(preg,\preceq_1)$
 and check whether $q$ occurs in some class. 
\end{enumerate}
For universal reachability of $q$, in step 4, one checks whether $q$
does not occur in any class. This gives us a non deterministic
procedure for the complementary problem.  For robust reachability in
step 2, one guesses an open parameter region \textit{i.e.}, only
specified by strict inequalities.

\smallskip \noindent We now analyse the complexity of these
procedures.  Due to lemmas~\ref{prop:pterminate}
and~\ref{lemma:psize}, the first step is performed in 2-EXPTIME and in
PTIME when the number of clocks is fixed. Guessing a parameter region
has the same complexity.

\noindent
The satisfiability problem for a first-order formula is in
PSPACE~\cite{canny88}. Due to lemma~\ref{prop:pterminate}, the number
$s$ of (in)equalities specifying the region fulfills
$s=O((H+M)^{2^{n}}\times U^{2^{n^2}})$ with the previous notations.
Let $b$ be the total number of bits of the integers occurring in a
constraint of the specification of the region. Due to
lemma~\ref{lemma:psize}, $b \leq ((n+1)!)^2 (n+1)2^{3n+1}b_0$. Let $d$
be the maximal degree of the polynomials occurring in the
specification of the region. Due to the same lemma, $d \leq
(n+1)!5^nd_0$.  So the emptiness problem for a region is decided in
2-EXPSPACE which becomes PSPACE when the number of levels is fixed.

\noindent
Observe now that the class automaton $\mathcal R(preg,\preceq_1)$ is
isomorphic to the class automaton of the ITA $\A(\pi)$ that would be
obtained from $\A$ with any parameter valuation $\pi$ in $\Pi=
(preg,\preceq_1)$. It has been proved in Section~\ref{sec:reach-ita}
that this automaton can be built in polynomial time w.r.t. the size of
the representation of any class. As the size of the representation of
a class of a PITA has the same order as the one of the corresponding
ITA (dominated by the doubly exponential number of expressions) and
the construction algorithms perform similar operations, this yields a
complexity of 2-EXPTIME and PSPACE when the number of levels is fixed.

\noindent
So the dominating factor of this non deterministic procedure
is the emptiness check done in 2-EXPSPACE. By Savitch theorem
this procedure can be determinised with the same complexity.

\end{proof}

%


\begin{example}
  The construction of $\mathcal R(\Pi)$ is illustrated on the
  automaton $\A_2$ from \figurename~\ref{fig:exita1}, for the region
  $\Pi=(preg,\preceq_1)$, where $preg$ was defined above by: $p_2 <
  0$, $p_2+1=0$, $1-p_2-4p_1^2=0$ and $1+p_1p_2-4p_2^3=0$. For
  $\preceq_1$, we first remove from $E_1$ the expressions with null
  denominator: $E_{1,preg} = \{x_1, 0, 2,
  -\frac{2(p_2+1)}{p_2},-2-p_2, \frac{p_2^2-2}{p_2},p_1\}$ and we
  consider the ordering on $E_{1,preg} \setminus \{x_1\}$ specified by
  the line below.


\begin{center}
\begin{tikzpicture}[auto]
\small
\draw (0,0) -- coordinate (x axis mid) (14,0);


\node[anchor=north] (g) at (1,0) {$-2-p_2$};
\fill(g.north) circle (0.05);

\node[anchor=north] (a) at (3,0) {$\begin{array}{c} 0,\\
 -\frac{2(p_2+1)}{p_2} \end{array}$};
\fill(a.north) circle (0.05);

\node[anchor=north] (f) at (5,0) {$\frac{p_2^2-2}{p_2}$};
\fill(f.north) circle (0.05);

\node[anchor=north] (d) at (7,0) {$2$};
\fill(d.north) circle (0.05);

\node[anchor=north] (c) at (13,0) {$p_1$};
\fill(c.north) circle (0.05);

\end{tikzpicture}
\end{center}

A part of the resulting class automaton $\mathcal R(\Pi)$, including
the run corresponding to the one in \figurename~\ref{fig:traj}, is
depicted in \figurename~\ref{fig:regaut}, where dashed lines indicate
(abstract) time steps.

The initial class is $R_0=(q_0, Z_0)$ where $Z_0$ is $\preceq_1$
extended with $x_1=0$. Denoting (slightly abusively) extensions with
the symbol $\wedge$, the time successors of the initial state are
obtained by moving $x_1$ to the right along the line:
$R_0^1=(q_0,\preceq_1 \wedge \ 0<x_1< \frac{p_2^2-2}{p_2})$,
$R_0^2=(q_0, \preceq_1 \wedge \ x_1= \frac{p_2^2-2}{p_2}), \ldots,$
up to
$R_0^7=(q_0,\preceq_1 \wedge \ x_1>p_1)$.  Transition $a$ can be fired
from all classes up to $R_0^5$ (but not from $R_0^6$ and $R_0^7$ where
the constraint $x_1 < p_1$ is not satisfied). In
\figurename~\ref{fig:regaut}, we represent only the one from $R_0^5
=(q_0, Z_1)$ with $Z_1 = \preceq_1 \wedge \ 2 <x_1<p_1$, corresponding
to the run in \figurename~\ref{fig:traj}.

\smallskip Along this run, the ordering $\preceq_2$ is determined by
region $\Pi$ and $Z_1$, on $E_{2,preg} \setminus\{x_2\} = \{0, x_1-2,
-\frac{x_1-2}{p_2},(p_1-4p_2^2)x_1+p_2\}$. It is illustrated on the
line below.

\begin{center}
\begin{tikzpicture}[auto]
\small
\draw (0,0) -- coordinate (x axis mid) (9,0);

\node[anchor=north] (a) at (1,0) {$0$};
\fill(a.north) circle (0.05);

\node[anchor=north] (e) at (5,0) {$\begin{array}{c} x_1-2, \\ 
-\frac{x_1-2}{p_2} \end{array}$};
\fill(e.north) circle (0.05);

\node[anchor=south] (f) at (7,0) {$(p_1-4p_2^2)x_1+p_2$};
\fill(f.south) circle (0.05);


\end{tikzpicture}
\end{center}
 
Firing transition $a$ produces the class $R_1=(q_1, Z_1, \preceq_2
\wedge x_2=0))$.  Transition $b$ is then fired from the (second) time
successor of $R_1$ for which $x_2=-\frac{x_1-2}{p_2}$.

\begin{figure}[ht]
\centering
\begin{tikzpicture}[node distance=3.75cm,auto]
\tikzstyle{every state}=[inner sep=2pt,draw=black,shape=rectangle,rounded corners=5pt]
\tikzstyle{time step}=[draw,->,dashed]
\tikzstyle{time succ}=[node distance=1cm]

\node[state, initial,initial text={}] (r0) at (0,0) {$R_0$};
\node[state] (r01) [node distance=1.6cm,below of=r0] {$R_0^1$}; 
\node(r02) [time succ, node distance=1.4cm, below of= r01]
{\raisebox{4pt}{$\vdots$}}; 
\node[state] (r03) [time succ,below of=r02] {$R_0^5$}; 
\node (r04) [time succ, node distance=1.2cm, below of=r03]
{\raisebox{4pt}{$\vdots$}}; \node[state] (r05) [time succ, below of
=r04] {$R_0^7$};

\node[state] (r1) [node distance=4.5cm,right of=r03] {$\begin{array}{cc}
    q_1, Z_1,\preceq_2~\wedge \\ x_2 = 0 \end{array}$};

\node[state] (r2) [above=0.7 of r1] {$\begin{array}{cc} q_1, Z_1,\preceq_2~\wedge \\  0 < x_2 < -\frac{x_1-2}{p_2} \end{array}$};

\node[state] (r3) [above =0.7 of r2] {$\begin{array}{cc} q_1, Z_1,\preceq_2~\wedge \\   x_2 = -\frac{x_1-2}{p_2} \end{array}$};

\node[state] (r18) [right=2 of r3] {$\begin{array}{cc} q_1, Z_1,\preceq_2~\wedge \\  x_2= (p_1-4p_2^2)x_1+p_2\end{array}$};

\node[state] (r19) [below=0.8 of r18] {$\begin{array}{cc} q_1, Z_1,\preceq_2~\wedge \\  x_2 > (p_1-4p_2^2)x_1+p_2 \end{array}$};



\node(d)[node distance=1.5cm,right of=r0] {$\cdots$};
\path[->] (r0) edge node {$a$} (d);

\node(d1)[node distance=1.5cm,right of=r01] {$\cdots$};
\path[->] (r01) edge node {$a$} (d1);

\coordinate[above=0.5 of r3] (d2);
\path[time step] (r3) -- (d2);

\path[time step] (r18) -- (r19);
\path[time step] (r2) -- (r3);

\path[->] (r3) edge node {$b$} (r18);
\path[->] (r03) edge node {$a$} (r1);

\path[time step] (r01) -- (r02);
\path[time step] (r03) -- (r04);
\path[time step] (r02) -- (r03);
\path[time step] (r04) -- (r05);
\path[time step] (r0) -- (r01);
\path[time step] (r1) -- (r2);

\path[time step] (r19) edge [loop below, in=-60,out=-120, looseness=4] node (tr) {} (r19);

\end{tikzpicture}
\caption{A part of $\mathcal R(\Pi)$ for $\A_2$}
\label{fig:regaut}
\end{figure}
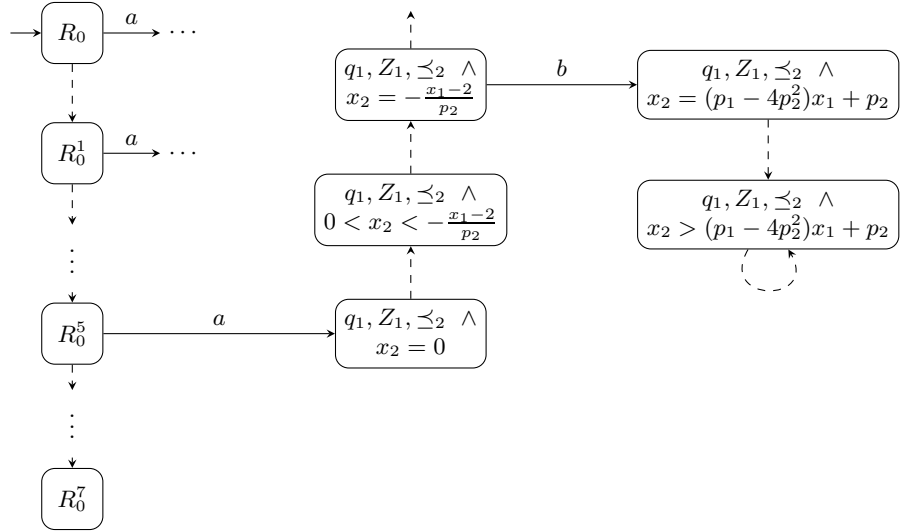

%
%
%
\end{example}


\section{Conclusion}
\label{sec:conc}

While seminal results on parametrised timed models leave little hope for
decidability in the general case, we provide here an expressive formalism
for the analysis of parametric reachability problems. Our setting includes
a restricted form of stopwatches and polynomials in the parameters
occurring as both additive and multiplicative coefficients of the clocks
in guards and updates. We plan to investigate which kind of timed temporal
logic would be decidable on PITA.



\end{document}